\documentclass[a4paper,10pt,english]{siamltex}
\usepackage{amsmath,amsfonts,amssymb,latexsym}
\usepackage{graphicx}
\usepackage[latin1]{inputenc}
\usepackage{color}
\usepackage{pstricks}
\usepackage[latin1]{inputenc}

\newcommand{\mat}[1]{{\testgreek#1\ifgreek\boldsymbol#1\else
                      \mathbf#1\fi}} 

\newif\ifgreek
\def\testgreek#1{
  \ifx#1\alpha\greektrue\else
  \ifx#1\beta\greektrue\else
  \ifx#1\gamma\greektrue\else\ifx#1\Gamma\greektrue\else
  \ifx#1\delta\greektrue\else\ifx#1\Delta\greektrue\else
  \ifx#1\epsilon\greektrue\else
  \ifx#1\zeta\greektrue\else
  \ifx#1\eta\greektrue\else
  \ifx#1\theta\greektrue\else\ifx#1\Theta\greektrue\else
  \ifx#1\iota\greektrue\else
  \ifx#1\kappa\greektrue\else
  \ifx#1\lambda\greektrue\else\ifx#1\Lambda\greektrue\else
  \ifx#1\mu\greektrue\else
  \ifx#1\nu\greektrue\else
  \ifx#1\xi\greektrue\else\ifx#1\Xi\greektrue\else
  \ifx#1\pi\greektrue\else\ifx#1\Pi\greektrue\else
  \ifx#1\rho\greektrue\else
  \ifx#1\sigma\greektrue\else\ifx#1\Sigma\greektrue\else
  \ifx#1\tau\greektrue\else
  \ifx#1\upsilon\greektrue\else\ifx#1\Upsilon\greektrue\else
  \ifx#1\phi\greektrue\else\ifx#1\Phi\greektrue\else
  \ifx#1\chi\greektrue\else
  \ifx#1\psi\greektrue\else\ifx#1\Psi\greektrue\else
  \ifx#1\omega\greektrue\else\ifx#1\Omega\greektrue\else
  \ifx#1\varepsilon\greektrue\else
  \ifx#1\vartheta\greektrue\else
  \ifx#1\varrho\greektrue\else
  \ifx#1\varsigma\greektrue\else
  \ifx#1\varphi\greektrue\else
     \greekfalse
  \fi\fi\fi\fi\fi\fi\fi\fi\fi\fi
  \fi\fi\fi\fi\fi\fi\fi\fi\fi\fi
  \fi\fi\fi\fi\fi\fi\fi\fi\fi\fi
  \fi\fi\fi\fi\fi\fi\fi\fi\fi}

\providecommand*{\Z}{\mathbb{Z}}      
\newcommand{\N}{\mathbb{N}}

\newcommand{\TT}{\mathbb{T}^{2}} 
\newcommand{\RR}{\mathbb{R}^{2}}

\providecommand*{\mrm}[1]{\mathrm{#1}}
\newcommand{\iu}{\mrm{i}}
\newcommand{\eu}{\mrm{e}}
\newcommand{\spkx}{\mrm{i}k\cdot x}
\newcommand{\spnx}{\mrm{i}n\cdot x}
\newcommand{\spmx}{\mrm{i}m\cdot x}
\newcommand{\diff}{\,\mrm{dx}}

\newcommand{\blk}[1]{\mathbf{#1}}



\newcommand{\qed}{\nobreak \ifvmode \relax \else
      \ifdim\lastskip<1.5em \hskip-\lastskip
      \hskip1.5em plus0em minus0.5em \fi \nobreak
      \vrule height0.75em width0.5em depth0.25em\fi}
 
\title{On the Spectrum of an Operator Pencil with Applications to Wave Propagation in Periodic and Frequency Dependent Materials}
\author{Christian Engstr\"om and Markus Richter}

\begin{document}

\maketitle

\begin{abstract}
We study wave propagation in periodic and frequency dependent materials. The approach in this paper leads to spectral analysis of a quadratic operator pencil where the spectral parameter relates to the quasimomentum and the frequency is a parameter. We show that the underlying operator has a discrete spectrum, where the eigenvalues are symmetrically placed with respect to the real and imaginary axis. Moreover, we discretize the operator pencil with finite elements and use a Krylov space method to compute eigenvalues of the resulting large sparse matrix pencil.
\end{abstract}

\section{Introduction}

In this paper, we study wave propagation in a two-dimensional periodic medium, where the material components depend on the frequency. The underlying operator has a band-structure, which by appropriate choice of the material configuration has gaps \cite{Joannopoulos1995}. Such structures have numerous applications in optics, photonics and microwave engineering \cite{Joannopoulos1995}. Electromagnetic wave propagation is in general frequency dependent, since dispersive effects such as resonances or relaxation processes are present in the material \cite{Cessenat1996,Jackson1999}. The frequency dependence of the material is in most cases disregarded in spectral analysis. This is justified in some cases where the frequency dependence is weak. However, the frequency dispersion can in general not be ignored. 

In the case of frequency independent materials, considerable mathematical prog--ress has been made \cite{Reed+Simon1978,Kuchment1993}. In the frequency dependent case, the spectral problem is non-linear, which complicates the analysis. There exist two different approaches to handle the frequency dependence. Within the first method, a real-valued quasimomentum  (the Floquet-Bloch wave vector) is specified and the spectral parameter is related to the time frequency \cite{Tip+Moroz+Combes2000,Spence+Poulton2005}. The second option is to study the quasimomentum $k$ as a function of a real frequency $\omega$. The non-linearity is arbitrary within the first approach, but the second approach leads regardless of the frequency dependence in the material model to a quadratic non-linearity. The quadratic non-linearity simplifies the spectral analysis and the numerical calculations. However, only numerical simulations exist for this approach \cite{Smith+Mailhiot1986,Hermann+etal2008}. 

We use the second option to study wave propagation in a periodic and frequency dependent medium that for a frequency interval is characterized by a space- and frequency-dependent real-valued permittivity $\epsilon (x,\omega)$. The waves propagate in a non-magnetic material with the permittivity $\epsilon (x_{1},x_{2},\omega)$ independent of the third coordinate $x_{3}$. Let $E$ and $H$ denote the electric and magnetic waves, respectively. The $x_{3}$- independent electromagnetic wave $(E,H)$ is decomposed into transverse electric (TE) polarized waves $(E_{1},E_{2},0,0,0,H_{3})$ and transverse magnetic (TM) polarized waves $(0,0,E_{3},H_{1},H_{2},0)$ \cite{Kuchment2001}. This decomposition reduces the spectral problem for the Maxwell operator to one scalar equation for $H_{3}$ and one scalar equation for $E_{3}$. We consider TM polarized waves. This gives a Helmholtz type of equation in $E_{3}$, but the analysis also applies to the TE case.

The approach in this paper leads to the spectral analysis of a quadratic operator pencil, where the spectral parameter relates to the quasimomentum and where the frequency is a parameter. We show that the underlying operator has a discrete spectrum, where the eigenvalues are symmetrically placed with respect to the real and imaginary axis (Hamiltonian symmetry). A finite element method is used to discretize the operator pencil and the resulting matrix pencil is transformed into a gyroscopic pencil. We implement the structure preserving skew-Hamiltonian isotropic implicitly restarted Arnoldi algorithm SHIRA \cite{mehrmann+watkins2001} to compute eigenvalues of the gyroscopic matrix pencil.

\section{Bloch solutions and spectral problems}

Let $\Gamma$ denote the lattice $2\pi\mathbb{Z}^{2}$ and $\epsilon$ a measurable function with
\begin{equation}
	\epsilon (x+\gamma,\omega)=\epsilon (x,\omega),\quad \forall \gamma \in \Gamma
	\label{periodic}
\end{equation}
for almost all $x=(x_{1},x_{2})\in  \mathbb{R}^{2}$. A function with the property (\ref{periodic}) is called $\Gamma$-periodic. The fundamental periodicity domain of the lattice $\Gamma$ is $\Omega=(-\pi,\pi]^{2}$ and $|\Omega|=4\pi^{2}$ denotes the area of the domain. In physics, a fundamental domain is often called a \emph{Wigner-Seitz cell} \cite{Kittel1986,Joannopoulos1995}. We assume that $\epsilon$ is in $L^{\infty}(\mathbb{R}^{2})$ for each $\omega\in [\omega_{0},\omega_{1}]$ and that $\epsilon$ is invertible, positive and bounded
\begin{equation}
	0<c_{0}\leq \epsilon (x,\omega) \leq c_{1},
\end{equation}
for almost all $x\in  \mathbb{R}^{2}$ and $\omega\in [\omega_{0},\omega_{1}]$. Let $\mathbb{T}^{2}=\mathbb{R}^{2}/\Gamma$ denote the torus in two dimensions and $C^{\infty}(\mathbb{T}^{2})$ the space of smooth complex-valued $\Gamma$-periodic functions. The space $L^{2}(\mathbb{T}^{2})$ is defined to be the completion of $C^{\infty}(\mathbb{T}^{2})$ in the $L^{2}$-norm and $H^{1}(\mathbb{T}^{2})$ is the completion of $C^{\infty}(\mathbb{T}^{2})$ in the $H^{1}$-norm. 

Let $\nabla$ denote the gradient with respect to $x\in\mathbb{R}^{2}$ and let $\nabla\cdot$ denote the divergence. The transverse magnetic polarized waves $(0,0,E_{3},H_{1},H_{2},0)$ can be written in terms of the electric field $E$ alone. The equation
\begin{equation}
	-\Delta v=\omega^{2} \epsilon(x,\omega)v
	\label{eq:bas}
\end{equation} 
models a time-harmonic monochromatic electromagnetic wave with frequency $\omega$ and electric polarization $E(x)=(0,0,v(x))$. A Bloch solution of (\ref{eq:bas}) is a non-zero solution of the form
\begin{equation}
	v(x)=\eu^{\spkx}u(x),
	\label{Bloch}
\end{equation}
where $u$ is a $\Gamma$-periodic function and $k\in\mathbb{C}^{2}$ is the quasimomentum (or the Floquet-Bloch wave vector) \cite{Kuchment1993}. Since 	
$\nabla (\eu^{\spkx}u(x))=\eu^{\spkx}(\nabla+\iu k)u(x)$ the Bloch solutions of (\ref{eq:bas}) are  $\Gamma$-periodic solutions of
\begin{equation} 
-(\nabla+\iu k)\cdot (\nabla+\iu k)u(x)=\omega^{2}\epsilon(x,\omega)u.
\label{shifted}
\end{equation}
The frequency $\omega$ as a multi-valued function of the quasimomentum $k$ is called the dispersion relation and the graph of the dispersion relation defines the Bloch variety \cite{Kuchment1993}.
\begin{definition}
The complex Bloch variety is the set
\begin{equation}
	B=\{(k,\omega)\in\mathbb{C}^{2}\times\mathbb{C}\,|\text{the problem (\ref{eq:bas}) has a non-zero Bloch solution}\}
	\label{BV}
\end{equation}
\end{definition}

Analytic properties of the Bloch variety $B$ are discussed in Kuchment \cite{Kuchment1993}. The Bloch solutions of (\ref{eq:bas}) can be determined from spectral problems. In the frequency independent case $\epsilon=\epsilon (x)$, Bloch solutions with $k \in\mathbb{R}^{2}$ are given by eigenvectors of a selfadjoint operator. In the frequency dependent case $\epsilon=\epsilon (x,\omega)$, the spectral problem is non-linear, which complicates the analysis. We will consider solutions $(k,\omega)\in\mathbb{C}^{2}\times\mathbb{R}$ in the frequency dependent case and use the real frequency $\omega$ as a parameter. The two cases are discussed below.

\subsection{The frequency independent case}
\label{non-disp}

Assume that $\epsilon=\epsilon (x)$ independent of the frequency $\omega$ and define the operator
\begin{equation}
	A(x,\nabla )v=-\frac{1}{\epsilon(x)}\Delta v.
	\label{Ainf}
\end{equation} 
The spectrum of elliptic operators with periodic coefficient has been studied in detail by many authors including Odeh and Keller \cite{Odeh+Keller1964}, Reed and Simon \cite{Reed+Simon1978}, and Figotin and Kuchment \cite{Figotin+Kuchment1996a}. The $L_{2}$-spectrum $\sigma (A)$ has a band structure and the existence of band-gaps for certain geometrical structures has been proved \cite{Figotin+Kuchment1996a, Figotin+Kuchment1996b,Figotin+Kuchment1998}. From the Floquet-Bloch theory \cite{Reed+Simon1978,Kuchment1993} follows the spectrum of $A$ in the infinite periodic medium can be obtained from the $k$ - dependent family of $\Gamma$-periodic solutions (\ref{shifted}). We give below the well-known construction of the band structure. By definition, the function $u\in H^{1}(\TT)$ is a solution of the shifted problem
\begin{equation} 
	A(k)u=\omega^{2}u,\quad A(k)u=-\frac{1}{\epsilon(x)}(\nabla+\iu k)\cdot (\nabla+\iu k)u(x)
\label{shifted2}
\end{equation}
if the integral identity
\begin{equation}
	\int_{\TT}(\nabla+\iu k)u\cdot\overline{(\nabla+\iu k)\phi}\diff=\omega^{2}	\int_{\TT}u\bar{\phi}\epsilon\diff
\end{equation}
is satisfied for all $\phi\in H^{1}(\TT)$. Each operator $A(k)$ is selfadjoint in $L^{2}(\mathbb{T}^{2})$, with scalar product weighted by $\epsilon (x)$ and $k\in\mathbb{R}^{2}$. Since $A(k)$ is non-negative and has a compact resolvent, the spectrum of $A(k)$ consists of eigenvalues 
\[
0\leq \omega_{0}^{2}(k)\leq \omega_{1}^{2}(k)\leq ...
\]
where the spectral parameter $\omega^{2}$ represent the square of the time frequency $\omega$ of the wave. 
The main spectral result is that the spectrum of $A$ (\ref{Ainf}) is the union
\begin{equation}
	\sigma (A)=\bigcup_{n\geq 0} S_{n},\quad S_{n}=[\min_{k}\omega_{n}^{2}(k),\max_{k}\omega_{n}^{2}(k)].
	\label{spA}
\end{equation}
The intervals $S_{n}$ are called the stability zones of the operator $A$ and the complement of $\sigma (A)$ is called the instability zone. The successive segments $S_{n}$ and $S_{n+1}$ may overlap,  but we have a gap in the spectrum $\sigma (A)$ if the intersection $S_{n}\cap S_{n+1}$ is empty for some $n$. That is, if $\omega^{2}$ belongs to the instability zone for all $k\in\mathbb{R}^{2}$. The spectrum $\sigma (A)$ can also be described in terms of the Bloch variety (\ref{BV}). The real Bloch variety is the intersection
\begin{equation}
	B_{\mrm{real}}=B \cap \mathbb{R}^{3}
	\label{realBV}
\end{equation}
and the spectrum of $A$ is the projection of the real Bloch variety onto the $\omega$- axis. This alternative description of the spectrum is important in the frequency dependent case below.

\section{The frequency dependent case}

We assumed above that the permittivity $\epsilon$ is independent of the frequency $\omega$. The spectrum $\sigma (A)$ coincides in the frequency independent case with the range of the dispersion relation $\omega^{2}(k)$. The permittivity can be constant in a frequency interval, but only vacuum is independent of the frequency. If we exclude vacuum, the Kramers-Kronig relations \cite{Cessenat1996,Jackson1999} implies that the imaginary part of $\epsilon (\omega)$ cannot be zero for all $\omega$. We focus on the dispersive case $\epsilon=\epsilon (x,\omega)$, where $\epsilon$ is real-valued in the frequency interval $[\omega_{0},\omega_{1}]$. That is, we assume that there is no absorption line in the frequency interval under consideration. This is a reasonable model for many dialectic materials (insulators) for a range of frequencies in the micro-wave regime and in the optical regime \cite{Jackson1999}. Below, we study the quadratic pencil in the case of a real-valued permittivity function. 

\subsection{The quadratic operator pencil}

We presented in Section \ref{non-disp} results for the shifted operator (\ref{shifted}) in the case $\epsilon=\epsilon (x)$, where the spectral parameter $\omega^{2}$ corresponds to the square of the time frequency $\omega$. We consider below the frequency dependent case and relate the spectrum parameter to the quasimomentum $k$. We reduce the spectral problem in $k$ to a problem in the complex amplitude of the vector $k$. 

Let $k=\lambda\hat{k}$, where $\lambda\in\mathbb{C}$ and $\hat{k}$ is a unit vector in $\RR$. Define a family of spectral problems as
\begin{equation} 
-(\nabla+\iu\lambda\hat{k})\cdot (\nabla+\iu\lambda\hat{k})u(x)=\omega^{2}\epsilon(x,\omega)u,
\label{classA}
\end{equation}
where $u$ and $\epsilon$ are $\Gamma-$ periodic functions. This spectral problem in $(\lambda,u)$ corresponds in the frequency independent case to the problem $A(k)u=\omega^{2}u$ in (\ref{shifted2}), but we consider $k(\omega)$ and not $\omega(k)$. The family of spectral problems (\ref{classA}) is quadratic in the spectral parameter $\lambda$, with parameter $\omega$. We write below the classical problem (\ref{classA}) in a weak form. Define the sesquilinear forms
\begin{equation}
	 \begin{aligned}
			a_{0} &: H^{1}(\mathbb{T}^{2})\times H^{1}(\mathbb{T}^{2}) & \rightarrow\mathbb{C}\\
			a_{1} &: L^{2}(\mathbb{T}^{2})\times H^{1}(\mathbb{T}^{2}) & \rightarrow\mathbb{C}\\
			a_{2} &: L^{2}(\mathbb{T}^{2})\times L^{2}(\mathbb{T}^{2}) & \rightarrow\mathbb{C}
	\end{aligned}
\end{equation}
where

\begin{equation}
	 \begin{aligned}
		a_{0}(u,v) & =\int_{\mathbb{T}^{2}}\nabla u\cdot\nabla \overline{v}-\omega^{2}\epsilon u\overline{v}\diff,\\
		a_{1}(u,v) & =2\iu\int_{\mathbb{T}^{2}}u\hat{k}\cdot\nabla\overline{v}\diff,\\
			a_{2}(u,v)& =\int_{\mathbb{T}^{2}}u\overline{v}\diff.
	\end{aligned}
\end{equation}
The  weak solutions of (\ref{classA}) are defined as eigenvectors $u$ of $H^{1}(\mathbb{T}^{2})\backslash\{0\}$ and eigenvalues $\lambda\in \mathbb{C}$ which satisfy
\begin{equation}  
	\lambda^{2} a_{2}(u,v)+\lambda a_{1}(u,v)+a_{0}(u,v)=0,
	\label{ii}
\end{equation}
for all $v\in H^{1}(\mathbb{T}^{2})$. 

\begin{lemma}
The sesquilinear forms $a_{0}$, $a_{1}$ and $a_{2}$ are bounded for all elements $u,v\in H^{1}(\mathbb{T}^{2})$.
\label{bounded}
\end{lemma}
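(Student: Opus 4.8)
The plan is to verify directly the defining estimate for boundedness of a sesquilinear form, namely the existence of constants $C_{j}$ with $|a_{j}(u,v)|\le C_{j}\|u\|_{H^{1}(\TT)}\|v\|_{H^{1}(\TT)}$ for $j=0,1,2$. The only tools required are the Cauchy--Schwarz inequality, the pointwise bound $0<c_{0}\le\epsilon(x,\omega)\le c_{1}$ assumed on the permittivity, and the fact that $\hat{k}$ is a unit vector so that $|\hat{k}\cdot\nabla\overline{v}|\le|\nabla v|$ pointwise. The guiding observation is that each form is built from integrals of the type $\int_{\TT}fg\diff$ with $f\in\{u,\nabla u\}$ and $g\in\{v,\nabla v\}$; since both $\|u\|_{L^{2}(\TT)}\le\|u\|_{H^{1}(\TT)}$ and $\|\nabla u\|_{L^{2}(\TT)}\le\|u\|_{H^{1}(\TT)}$ follow from the definition of the $H^{1}$-norm, every such integral is controlled by the product of the $H^{1}$-norms.

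First I would dispatch $a_{2}$ and $a_{1}$. For $a_{2}$, a single application of Cauchy--Schwarz gives $|a_{2}(u,v)|\le\|u\|_{L^{2}(\TT)}\|v\|_{L^{2}(\TT)}\le\|u\|_{H^{1}(\TT)}\|v\|_{H^{1}(\TT)}$, so $C_{2}=1$ suffices. For $a_{1}$ I would extract the constant $2$, estimate the integrand using $|\hat{k}|=1$, and then apply Cauchy--Schwarz to obtain $|a_{1}(u,v)|\le 2\|u\|_{L^{2}(\TT)}\|\nabla v\|_{L^{2}(\TT)}\le 2\|u\|_{H^{1}(\TT)}\|v\|_{H^{1}(\TT)}$, giving $C_{1}=2$.

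Next I would treat $a_{0}$ by splitting it into its gradient (stiffness) part and its mass part. The gradient part is bounded by $\|\nabla u\|_{L^{2}(\TT)}\|\nabla v\|_{L^{2}(\TT)}\le\|u\|_{H^{1}(\TT)}\|v\|_{H^{1}(\TT)}$ via Cauchy--Schwarz. For the mass part I would invoke the upper bound $\epsilon\le c_{1}$ to obtain $\omega^{2}\bigl|\int_{\TT}\epsilon u\overline{v}\diff\bigr|\le\omega^{2}c_{1}\|u\|_{L^{2}(\TT)}\|v\|_{L^{2}(\TT)}\le\omega^{2}c_{1}\|u\|_{H^{1}(\TT)}\|v\|_{H^{1}(\TT)}$. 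Adding the two contributions yields $C_{0}=1+\omega^{2}c_{1}$; since $\omega$ ranges over the compact interval $[\omega_{0},\omega_{1}]$, one may take the uniform constant $C_{0}=1+\omega_{1}^{2}c_{1}$.

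There is no genuine obstacle in this argument, as the lemma is a routine continuity check. Indeed, although $a_{1}$ and $a_{2}$ were originally declared on the larger spaces $L^{2}(\TT)\times H^{1}(\TT)$ and $L^{2}(\TT)\times L^{2}(\TT)$, restricting them to $H^{1}(\TT)\times H^{1}(\TT)$ only strengthens the available control, and the estimates above already furnish explicit constants. The single point requiring any care is the bookkeeping that absorbs the separate $L^{2}$- and gradient-$L^{2}$-norms into the one $H^{1}$-norm, which is immediate from the definition of $\|\cdot\|_{H^{1}(\TT)}$.
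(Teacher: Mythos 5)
Your proof is correct and takes essentially the same route as the paper's: a direct verification of the boundedness estimate by elementary integral inequalities. The paper's own proof is merely a one-line sketch (it calls $a_{0}$ and $a_{2}$ ``trivially bounded'' and handles $a_{1}$ via the inequality $|ab|\leq \delta |a|^{2}/2+|b|^{2}/(2\delta)$, which after optimizing in $\delta$ is exactly your Cauchy--Schwarz step), so your write-up just carries out the same argument in full and makes the constants explicit.
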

\begin{proof}
The sesquilinear forms $a_{0}$ and $a_{2}$ are trivially bounded and the boundedness of $a_{1}$ follows from the inequality $|ab|\leq \delta |a|^{2}/2+|b|^{2}/(2\delta),\, \delta>0$.
\end{proof}

Let $(\cdot,\cdot )_{H^{1}(\mathbb{T}^{2})}$ denote the scalar product in $H^{1}(\mathbb{T}^{2})$. The spectral problem (\ref{ii}) can be written as
\begin{equation}
		\lambda^{2}(A_{2}u,v)_{H^{1}(\mathbb{T}^{2})}+\lambda (A_{1}u,v)_{H^{1}(\mathbb{T}^{2})}+(A_{0}u,v)_{H^{1}(\mathbb{T}^{2})}=0,
		 \label{OP}
\end{equation}
where the operators $A_{0}$, $A_{1}$ and $A_{2}$ are defined by the bounded sesquilinear forms
\begin{equation}
	(A_{k}u,v)_{H^{1}(\mathbb{T}^{2})}=a_{k}(u,v),\quad k=0,1,2.
\end{equation}
According to the Riesz theorem these operators are on $H^{1}(\mathbb{T}^{2})$ unique, linear and bounded \cite{Kato1980}.
The equality (\ref{OP}) that holds for all $v\in H^{1}(\mathbb{T}^{2})$ is equivalent to the operator equation
\begin{equation}
	A_{0}u+\lambda A_{1}u+\lambda^{2}A_{2}u=0
		 \label{OP2}
\end{equation}
in $H^{1}(\mathbb{T}^{2})$.

Let $\mathcal{L}(H^{1}(\mathbb{T}^{2}))$ denote the set of all bounded linear operators on $H^{1}(\mathbb{T}^{2})$. We introduce the operator pencil $Q:\mathbb{C}\rightarrow \mathcal{L}(H^{1}(\mathbb{T}^{2}))$
\begin{equation}
	Q(\lambda)=A_{0}+\lambda A_{1}+\lambda^{2}A_{2},\quad \lambda\in \mathbb{C}.
	\label{pencil}
\end{equation}
The quadratic eigenvalue problem is then: Find $\lambda\in \mathbb{C}$ and a non-zero $u\in H^{1}(\mathbb{T}^{2})$ such that
\begin{equation}
	Q(\lambda)u=0.
	\label{Keq}
\end{equation}
The adjoint operator pencil is
\begin{equation}
	Q^{*}(\lambda)=A_{0}^{*}+\overline{\lambda}A_{1}^{*}+\overline{\lambda}^{2}A_{2}^{*},\quad\lambda\in \mathbb{C}
\end{equation}
and the pencil $Q(\lambda)$ is said to be selfadjoint if $A_{0}$, $A_{1}$ and $A_{2}$ are selfadjoint \cite{Markus1988}. The lemma below shows that the pencil $Q(\lambda)$ is selfadjoint.
\begin{lemma}
The operators $A_{0}$, $A_{1}$ and $A_{2}$ are self-adjoint
\label{Asa}
\end{lemma}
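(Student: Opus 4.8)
The plan is to reduce the self-adjointness of each $A_k$ to the Hermitian symmetry of the corresponding form. Since $(A_k u,v)_{H^1(\TT)}=a_k(u,v)$ by definition, and since $(u,A_k v)_{H^1(\TT)}=\overline{(A_k v,u)_{H^1(\TT)}}=\overline{a_k(v,u)}$, the operator $A_k$ is self-adjoint on $H^1(\TT)$ if and only if
\begin{equation}
  a_k(u,v)=\overline{a_k(v,u)},\qquad \forall\,u,v\in H^1(\TT).
\end{equation}
It therefore suffices to verify this identity for $k=0,1,2$.

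For $a_2$ the identity is immediate, since $\overline{a_2(v,u)}=\overline{\int_{\TT} v\overline{u}\diff}=\int_{\TT} u\overline{v}\diff=a_2(u,v)$. For $a_0$, taking the complex conjugate of $a_0(v,u)$ and using that $\omega$ is real together with the standing assumption of this section that $\epsilon(x,\omega)$ is real-valued on $[\omega_0,\omega_1]$ gives $\overline{a_0(v,u)}=\int_{\TT}\nabla u\cdot\nabla\overline{v}-\omega^2\epsilon u\overline{v}\diff=a_0(u,v)$; the reality of $\epsilon$ is precisely what makes the potential term Hermitian.

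The one step requiring a short argument is $a_1$, and this is where integration by parts on the torus enters. Taking conjugates and using that $\hat{k}\in\RR$ is real gives $\overline{a_1(v,u)}=-2\iu\int_{\TT}\overline{v}\,\hat{k}\cdot\nabla u\diff$. To identify this with $a_1(u,v)=2\iu\int_{\TT}u\,\hat{k}\cdot\nabla\overline{v}\diff$ I would use the product rule $\hat{k}\cdot\nabla(u\overline{v})=(\hat{k}\cdot\nabla u)\overline{v}+u(\hat{k}\cdot\nabla\overline{v})$ and integrate over $\TT$. Because $u\overline{v}$ is $\Gamma$-periodic, the integral of the directional derivative $\hat{k}\cdot\nabla(u\overline{v})$ over the fundamental cell vanishes, with no boundary contributions on the torus, which yields $\int_{\TT}u(\hat{k}\cdot\nabla\overline{v})\diff=-\int_{\TT}\overline{v}(\hat{k}\cdot\nabla u)\diff$ and hence $a_1(u,v)=\overline{a_1(v,u)}$.

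The main obstacle is thus only technical: the integration-by-parts identity must first be established for smooth $\Gamma$-periodic functions, where it is the divergence theorem with cancelling boundary terms, and then extended to all of $H^1(\TT)$ by density, using the boundedness of $a_1$ from Lemma \ref{bounded} to pass to the limit. Once the three symmetry identities hold, the self-adjointness of $A_0$, $A_1$ and $A_2$ follows, and the pencil $Q(\lambda)$ is self-adjoint in the sense defined above.
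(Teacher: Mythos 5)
Your proposal is correct and takes essentially the same route as the paper: both reduce the self-adjointness of the bounded, everywhere-defined operators $A_{k}$ to the Hermitian symmetry $a_{k}(u,v)=\overline{a_{k}(v,u)}$ of the defining sesquilinear forms. The only difference is one of detail: you explicitly verify the symmetry of $a_{1}$ via periodic integration by parts on $\TT$ (first for smooth $\Gamma$-periodic functions, then by density and boundedness), which is precisely the one non-trivial identity that the paper's chain of equalities asserts without comment.
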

\noindent
\begin{proof} 
Since the operators $A_{0}$, $A_{1}$ and $A_{2}$ are defined on the full space $H^{1}(\mathbb{T}^{2})$ the lemma follows from the equalities
\[
	(A_{0}u,v)_{H^{1}(\mathbb{T}^{2})}=a_{0}(u,v)=\overline{(A_{0}v,u)}_{H^{1}(\mathbb{T}^{2})}=\overline{(v,A_{0}^{*}u)}_{H^{1}(\mathbb{T}^{2})}=(A_{0}^{*}u,v)_{H^{1}(\mathbb{T}^{2})}
\]
\[
	(A_{1}u,v)_{H^{1}(\mathbb{T}^{2})}=a_{1}(u,v)=\overline{(A_{1}v,u)}_{H^{1}(\mathbb{T}^{2})}=\overline{(v,A_{1}^{*}u)}_{H^{1}(\mathbb{T}^{2})}=(A_{1}^{*}u,v)_{H^{1}(\mathbb{T}^{2})},
\]
\[
	(A_{2}u,v)_{H^{1}(\mathbb{T}^{2})}=a_{2}(u,v)=\overline{(A_{2}v,u)}_{H^{1}(\mathbb{T}^{2})}=\overline{(v,A_{2}^{*}u)}_{H^{1}(\mathbb{T}^{2})}=(A_{2}^{*}u,v)_{H^{1}(\mathbb{T}^{2})}.
\]
\end{proof}

\subsection{Spectral properties} 

We consider in this section spectral properties of the operator pencil (\ref{pencil}). Fredholm theory is used to prove that the spectrum of $Q$ is discrete, where the eigenvalues are symmetrically placed with respect to the real and imaginary axis. Moreover, we prove that the pencil $Q(\lambda)$ cannot be reduced to a monic bounded pencil, that is, to a pencil of the form
\begin{equation}
	Q(\lambda)=B_{0}+\lambda B_{1}+\lambda^{2}\mrm{I},\quad \lambda\in \mathbb{C}.
\end{equation}
Nonmonic pencils has in general a more complex spectral structure \cite{Markus1988}.
\begin{lemma} 
The operator $A_{2}$ has an unbounded inverse.
\end{lemma}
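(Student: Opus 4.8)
The plan is to show that $A_{2}$ is a compact injective operator on the infinite-dimensional space $H^{1}(\mathbb{T}^{2})$, which forces its inverse to be unbounded. The starting point is the defining identity $(A_{2}u,v)_{H^{1}(\mathbb{T}^{2})}=a_{2}(u,v)=(u,v)_{L^{2}(\mathbb{T}^{2})}$, valid for all $v\in H^{1}(\mathbb{T}^{2})$. Reading this against the embedding $J\colon H^{1}(\mathbb{T}^{2})\hookrightarrow L^{2}(\mathbb{T}^{2})$, it says precisely that $A_{2}=J^{*}J$, where $J^{*}$ denotes the Hilbert-space adjoint of $J$.

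First I would establish compactness. By the Rellich--Kondrachov theorem the embedding $J$ of $H^{1}(\mathbb{T}^{2})$ into $L^{2}(\mathbb{T}^{2})$ is compact, since $\mathbb{T}^{2}$ is a compact manifold; as $J^{*}$ is bounded, the composition $A_{2}=J^{*}J$ is compact. Equivalently one can diagonalize: expanding $u=\sum_{n\in\mathbb{Z}^{2}}\hat{u}_{n}\eu^{\spnx}$ and matching Fourier coefficients in the defining identity gives $\widehat{(A_{2}u)}_{n}=(1+|n|^{2})^{-1}\hat{u}_{n}$, so that $A_{2}$ acts diagonally with eigenvalues $(1+|n|^{2})^{-1}\to 0$ on the $H^{1}(\mathbb{T}^{2})$-orthonormal basis obtained by normalizing the exponentials. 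A diagonal operator whose eigenvalues tend to zero is compact.

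Next I would note that $A_{2}$ is injective: if $A_{2}u=0$ then $0=(A_{2}u,u)_{H^{1}(\mathbb{T}^{2})}=\|u\|_{L^{2}(\mathbb{T}^{2})}^{2}$, hence $u=0$. Consequently $A_{2}$ admits an inverse $A_{2}^{-1}$ defined on its range. To see that this inverse is unbounded, I would take an $H^{1}(\mathbb{T}^{2})$-orthonormal sequence $\{\phi_{n}\}$ (for instance the normalized exponentials above). Then $\phi_{n}\rightharpoonup 0$ weakly, so compactness yields $\|A_{2}\phi_{n}\|_{H^{1}(\mathbb{T}^{2})}\to 0$, while $\|\phi_{n}\|_{H^{1}(\mathbb{T}^{2})}=1$. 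A bounded inverse would give $1=\|\phi_{n}\|_{H^{1}(\mathbb{T}^{2})}=\|A_{2}^{-1}A_{2}\phi_{n}\|_{H^{1}(\mathbb{T}^{2})}\le\|A_{2}^{-1}\|\,\|A_{2}\phi_{n}\|_{H^{1}(\mathbb{T}^{2})}\to 0$, a contradiction. Hence $A_{2}^{-1}$ is unbounded.

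The one point needing care --- and the only genuine obstacle --- is the precise meaning of the statement. From the Fourier picture the range of $A_{2}$ is the proper dense subspace $H^{3}(\mathbb{T}^{2})\subset H^{1}(\mathbb{T}^{2})$, so $A_{2}^{-1}$ is not everywhere defined, and ``unbounded inverse'' must be read as unboundedness of $A_{2}^{-1}$ on its natural domain $\mathrm{Range}(A_{2})$. The weak-convergence argument above handles exactly this case and avoids any appeal to surjectivity; alternatively one may invoke that $0$ lies in the spectrum of the compact injective operator $A_{2}$, which is the same conclusion phrased spectrally. This unboundedness is precisely the obstruction that prevents reducing $Q(\lambda)$ to a monic bounded pencil by multiplication with $A_{2}^{-1}$.
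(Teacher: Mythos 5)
Your proof is correct, but it takes a genuinely different route from the paper. The paper never invokes compactness at this point; instead it uses the generalized Cauchy--Schwarz inequality for the positive operator $A_{2}$, namely $\|A_{2}u\|^{2}_{H^{1}(\mathbb{T}^{2})}\leq \|A_{2}\|\,(A_{2}u,u)_{H^{1}(\mathbb{T}^{2})}=C\|u\|^{2}_{L^{2}(\mathbb{T}^{2})}$, and then observes that for $u_{n}(x)=\mathrm{e}^{\mathrm{i}n\cdot x}$ the right-hand side stays bounded while $\|u_{n}\|_{H^{1}(\mathbb{T}^{2})}\to\infty$, so no lower bound $\|A_{2}u\|_{H^{1}(\mathbb{T}^{2})}\geq c\|u\|_{H^{1}(\mathbb{T}^{2})}$ can hold; injectivity (trivial null space) then gives existence of the inverse on the range. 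Your argument instead rests on the factorization $A_{2}=J^{*}J$ with the compact Rellich--Kondrachov embedding $J$, plus the standard fact that a compact injective operator on an infinite-dimensional space has an unbounded inverse (via weak convergence of an orthonormal sequence). Both proofs ultimately exploit the same witnesses (the exponentials), but yours buys more: the explicit eigenvalues $(1+|n|^{2})^{-1}$, the identification of the range as $H^{3}(\mathbb{T}^{2})$, and the clarification of what ``unbounded inverse'' means when $A_{2}$ is not surjective --- a point the paper's phrasing glosses over. The trade-off is organizational: compactness of $A_{2}$ is established in the paper only in a \emph{later} lemma (as input to the Fredholm theorem), so within the paper's logical ordering your proof front-loads and duplicates that result, whereas the paper's argument here is deliberately more elementary, needing only boundedness and positivity of $A_{2}$.
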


\begin{proof} 
Since $A_{2}>0$ we have \cite{Birman+Solomjak1987}
\begin{equation}
	\begin{aligned}
		\|A_{2}u\|^{2}_{H^{1}(\mathbb{T}^{2})} & \leq \|A_{2}\|_{\mathcal{L}(H^{1}(\mathbb{T}^{2}))}(A_{2}u,u)_{H^{1}(\mathbb{T}^{2})}\\
		&=C\int_{\mathbb{T}^{2}}|u|^{2}\diff.
	\end{aligned}
	\label{A2inv}
\end{equation}
The operator $A_{2}$ has a bounded inverse if and only if there exist a positive constant $c$ such that $\|A_{2}u\|_{H^{1}(\mathbb{T}^{2})}\geq c\|u\|_{H^{1}(\mathbb{T}^{2})}$ for all $u\in H^{1}(\mathbb{T}^{2})$. The inequality (\ref{A2inv}) implies that such constant $c$ cannot exist (take for instance $u(x)=\eu^{\iu n\cdot x}$, $n\in \N$), but $A_{2}$ has an inverse since the null space $N(A_{2})$ is trivial.
\end{proof}
 
A bounded operator $T$ is Fredholm if the range $R(T)$ is closed and if the null space $N(T)$, and the complement of the range $R(T)$ are finite
dimensional. The Fredholm index of a Fredholm operator $T$ is defined as
\begin{equation}
    \mrm{ind}\, T=\mrm{nul}\,T-\mrm{def}\, T,
    \label{index}
\end{equation}
where the nullity $\mrm{nul}\, T$ is the dimension of $N(T)$ and the
deficiency $\mrm{def}\,T$ is the dimension of the complement to range
$R(T)$. The resolvent set $\rho$ is the set of all $\lambda\in \mathbb{C}$, such that the
operator $T(\lambda )$ is continuously invertible and the spectrum is defined by
the complement
\begin{equation}
    \sigma (T)=\mathbb{C}\backslash\rho (T).
\end{equation}
The essential spectrum $\sigma_{\mrm{ess}}(T)$ of $T$ is the set of all
$\lambda\in\sigma (T)$ such that $T(\lambda)$ is not a Fredholm operator. The
complement of the essential spectrum
\begin{equation}
    \sigma_{\mrm{discr}}(T)=\sigma (T)\backslash\sigma_{\mrm{ess}}(T)
\end{equation}
is called the discrete spectrum. The discrete spectrum consists of eigenvalues of finite geometrical multiplicity, $\mrm{nul}\, T<\infty$. We shall show that the spectrum of $Q(\lambda)$ is discrete. The proof consists of two steps. We prove that the operator pencil $Q(\lambda)$ is a sum of compact operators and a Fredholm operator.

\begin{lemma}
The operators $A_{1}$ and $A_{2}$ are compact in $H^{1}(\mathbb{T}^{2})$.
\end{lemma}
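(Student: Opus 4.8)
The plan is to exploit the fact that both sesquilinear forms $a_{1}$ and $a_{2}$ contain the first argument $u$ \emph{undifferentiated}: neither form involves $\nabla u$, only $u$ itself. Consequently the corresponding operators effectively gain a derivative in $u$ and should factor through the compact embedding $H^{1}(\TT)\hookrightarrow L^{2}(\TT)$ furnished by the Rellich--Kondrachov theorem (applicable because the fundamental domain $\Omega$ is bounded). The first step is therefore to record the a priori estimates
\[
\|A_{2}u\|_{H^{1}(\TT)}\le\|u\|_{L^{2}(\TT)},\qquad \|A_{1}u\|_{H^{1}(\TT)}\le 2\|u\|_{L^{2}(\TT)},
\]
valid for every $u\in H^{1}(\TT)$, in which the \emph{output} $H^{1}$-norm is controlled by the weaker $L^{2}$-norm of the input.

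To obtain these bounds I would use the defining identity $(A_{j}u,v)_{H^{1}(\TT)}=a_{j}(u,v)$ together with the Hilbert-space duality $\|w\|_{H^{1}(\TT)}=\sup_{\|v\|_{H^{1}(\TT)}\le 1}|(w,v)_{H^{1}(\TT)}|$. For $A_{2}$,
\[
\|A_{2}u\|_{H^{1}(\TT)}=\sup_{\|v\|_{H^{1}(\TT)}\le 1}\Bigl|\int_{\TT}u\overline{v}\diff\Bigr|\le\|u\|_{L^{2}(\TT)},
\]
since $\|v\|_{L^{2}(\TT)}\le\|v\|_{H^{1}(\TT)}$. For $A_{1}$ the same computation with $a_{1}(u,v)=2\iu\int_{\TT}u\,\hat{k}\cdot\nabla\overline{v}\diff$ uses Cauchy--Schwarz and the estimate $\|\hat{k}\cdot\nabla v\|_{L^{2}(\TT)}\le\|\nabla v\|_{L^{2}(\TT)}\le\|v\|_{H^{1}(\TT)}$, where $|\hat{k}|=1$, yielding the factor $2$ above.

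Finally I would combine these estimates with compactness of the embedding. It suffices to show that $A_{1}$ and $A_{2}$ carry weakly convergent sequences to norm-convergent ones. If $u_{n}\rightharpoonup u$ in $H^{1}(\TT)$, then Rellich--Kondrachov gives $u_{n}\to u$ strongly in $L^{2}(\TT)$, and the a priori estimates immediately yield $\|A_{j}(u_{n}-u)\|_{H^{1}(\TT)}\le C_{j}\|u_{n}-u\|_{L^{2}(\TT)}\to 0$ for $j=1,2$; a bounded operator with this weak-to-strong mapping property is compact. I do not expect a genuine obstacle here, as the argument is routine; the one point that must be got right is the asymmetry of the forms, namely that $u$ occurs undifferentiated. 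By contrast $a_{0}$ contains the term $\int_{\TT}\nabla u\cdot\nabla\overline{v}\diff$, so $A_{0}$ does \emph{not} inherit compactness from Rellich and instead plays the role of the Fredholm part in the subsequent discreteness argument for $Q(\lambda)$.
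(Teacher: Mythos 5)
Your proof is correct and follows essentially the same route as the paper: both rest on the compact embedding $H^{1}(\mathbb{T}^{2})\hookrightarrow L^{2}(\mathbb{T}^{2})$ together with the key estimate $\|A_{j}u\|_{H^{1}(\mathbb{T}^{2})}\le C_{j}\|u\|_{L^{2}(\mathbb{T}^{2})}$, which the paper obtains by testing the form with $v=A_{j}(u_{m}-u_{m'})$ and you obtain by duality. The only (immaterial) difference is the final packaging of compactness: the paper extracts an $L^{2}$-Cauchy subsequence from a bounded sequence, while you use the weak-to-norm sequential characterization.
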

\begin{proof}
We show that $A_{1}$ is compact.  Let $\{u_{n}\}\subset H^{1}(\mathbb{T}^{2})$ be a given bounded sequence. Then since $H^{1}(\mathbb{T}^{2})$ is compactly embedded into $L^{2}(\mathbb{T}^{2})$ \cite{Taylor1981}, the sequence $\{u_{n}\}$ has a convergent subsequence $\{u_{m}\}$ in $L^{2}(\mathbb{T}^{2})$. Let $u_{m}$ and $u_{m'}$ denote two element in the subsequence $\{u_{m}\}$. From the boundedness of the Hermitian form $a_{1}$  follows
\begin{equation} 
    \begin{aligned}
			\|A_{1}u_{m}-A_{1}u_{m'}\|^{2}_{H^{1}(\mathbb{T}^{2})} &= |a_{1}(u_{m}-u_{m'},A_{1}u_{m}-A_{1}u_{m'})|\\
			&\leq  C\|A_{1}u_{m}-A_{1}u_{m'}\|_{H^{1}(\mathbb{T}^{2})}\|u_{m}-u_{m'}\|_{L^{2}(\mathbb{T}^{2})}.
    \end{aligned}
\end{equation}
Then since $\{u_{m}\}$ is a Cauchy sequence in $L^{2}(\mathbb{T}^{2})$, the sequence $\{A_{1}u_{m}\}$ is a Cauchy sequence in $H^{1}(\mathbb{T}^{2})$, which converges since $H^{1}(\mathbb{T}^{2})$ is complete. Using the inequality $\|u\|_{L^{2}(\mathbb{T}^{2})}\leq C\|u\|_{H^{1}(\mathbb{T}^{2})}$, the compactness of $A_{2}$ is proved in the same way. 
\end{proof}

\begin{theorem}
The operator pencil $Q(\lambda)$ is a Fredholm operator with index zero.
\label{Fredholm}
\end{theorem}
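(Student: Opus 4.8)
The plan is to exhibit $Q(\lambda)$, for each fixed $\lambda\in\mathbb{C}$, as a compact perturbation of the identity operator $\mrm{I}$ on $H^{1}(\TT)$, and then to invoke the stability of both the Fredholm property and the Fredholm index under compact perturbations. Since $\mrm{I}$ is trivially Fredholm with index zero, this yields the claim at once. The only term not already known to be compact is $A_{0}$, so the heart of the argument is to analyse $A_{0}$ on its own.

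First I would compare the defining form $a_{0}$ with the $H^{1}$-inner product $(u,v)_{H^{1}(\TT)}=\int_{\TT}\nabla u\cdot\nabla\overline{v}+u\overline{v}\diff$. Splitting off the gradient part gives
\[
a_{0}(u,v)=(u,v)_{H^{1}(\TT)}-\int_{\TT}(1+\omega^{2}\epsilon)u\overline{v}\diff .
\]
The first term represents $\mrm{I}$ through the Riesz theorem, so that $A_{0}=\mrm{I}-K_{0}$, where $K_{0}$ is the bounded operator defined by $(K_{0}u,v)_{H^{1}(\TT)}=\int_{\TT}(1+\omega^{2}\epsilon)u\overline{v}\diff$. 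Because $1+\omega^{2}\epsilon\in L^{\infty}(\TT)$ and this form contains only $L^{2}$-products of $u$ and $v$ with no derivatives, $K_{0}$ is compact by exactly the argument used in the preceding lemma: a bounded sequence in $H^{1}(\TT)$ has an $L^{2}$-convergent subsequence via the compact embedding $H^{1}(\TT)\hookrightarrow L^{2}(\TT)$, and the estimate $\|K_{0}u_{m}-K_{0}u_{m'}\|_{H^{1}(\TT)}\leq C\|u_{m}-u_{m'}\|_{L^{2}(\TT)}$ then forces $\{K_{0}u_{m}\}$ to be Cauchy in $H^{1}(\TT)$.

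With this decomposition the conclusion follows quickly. For fixed $\lambda$ the operator $K:=K_{0}-\lambda A_{1}-\lambda^{2}A_{2}$ is compact, being a linear combination of the compact operators $K_{0}$, $A_{1}$ and $A_{2}$ (the last two by the previous lemma). Hence
\[
Q(\lambda)=A_{0}+\lambda A_{1}+\lambda^{2}A_{2}=\mrm{I}-K ,
\]
a compact perturbation of the identity. By the Riesz--Schauder theory, $\mrm{I}-K$ has closed range together with finite-dimensional kernel and cokernel, so it is Fredholm; and since the index is invariant under compact perturbations, $\mrm{ind}\,Q(\lambda)=\mrm{ind}\,\mrm{I}=0$.

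I do not anticipate a serious obstacle: the entire content is the splitting $A_{0}=\mrm{I}-K_{0}$, that is, recognizing that the principal (gradient) part of $a_{0}$ reconstructs the full $H^{1}$-inner product so that everything else is a lower-order, and therefore compact, remainder. The only point needing a little care is the compactness of $K_{0}$: one must confirm that the coefficient $1+\omega^{2}\epsilon$, though merely in $L^{\infty}(\TT)$ and not continuous, does not obstruct the compactness argument. Since mere boundedness of the coefficient is all that the Cauchy-sequence estimate requires, this causes no difficulty.
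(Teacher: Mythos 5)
Your proof is correct, and it takes a genuinely different route from the paper's. The paper splits $A_{0}=A_{0}^{(1)}+A_{0}^{(2)}$, where $A_{0}^{(2)}$ is the Riesz representative of the pure gradient form $\int_{\TT}\nabla u\cdot\nabla\overline{v}\diff$; it then expands in Fourier series to show that $A_{0}^{(2)}$ has one-dimensional kernel (the constants), invokes self-adjointness to conclude $R(A_{0}^{(2)})^{\bot}=N(A_{0}^{(2)})$, hence $\mrm{ind}\,A_{0}^{(2)}=0$, and finally perturbs by the compact operators $A_{0}^{(1)}$, $\lambda A_{1}$, $\lambda^{2}A_{2}$. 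You instead absorb the zeroth-order part of the $H^{1}$-inner product into the principal term, writing $a_{0}(u,v)=(u,v)_{H^{1}(\TT)}-\int_{\TT}(1+\omega^{2}\epsilon)u\overline{v}\diff$, so that $A_{0}=\mrm{I}-K_{0}$ with $K_{0}$ compact by the same embedding argument used for $A_{1}$ and $A_{2}$; then $Q(\lambda)=\mrm{I}-K$ is a compact perturbation of the identity and Riesz--Schauder theory gives the Fredholm property and index zero at once. Your decomposition is cleaner: it eliminates the Fourier computation entirely, makes the index-zero conclusion trivial (the identity needs no analysis, whereas the paper must separately establish $\mrm{nul}=\mrm{def}=1$ for its principal part), and it would generalize verbatim to any setting where the embedding $H^{1}\hookrightarrow L^{2}$ is compact, without requiring an explicit orthonormal basis of exponentials. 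What the paper's longer route buys is explicit information: it exhibits the kernel of the principal part $A_{0}^{(2)}$ as the constants, a fact that resonates with the later discussion of $\omega=0$ (where the constant function is the eigenvector of $A_{0}(\omega)$); your argument establishes the theorem but discards that structural information. Both proofs rest on the same two pillars --- the compactness lemma and stability of the Fredholm index under compact perturbation --- so the difference is in where the Fredholm "core" is placed, and your placement is the more economical one.
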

\begin{proof} 
The operator $A_{0}$ in (\ref{pencil}) can be written of the form
\begin{equation}
    (A_{0}u,v)_{H^{1}(\mathbb{T}^{2})}=(A_{0}^{(1)}u,v)_{H^{1}(\mathbb{T}^{2})}+(A_{0}^{(2)}u,v)_{H^{1}(\mathbb{T}^{2})},
\end{equation}
where $A_{0}^{(1)}$ and $A_{0}^{(2)}$ are defined by the bounded sesquilinear forms
\begin{equation}
    (A_{0}^{(1)}u,v)_{H^{1}(\mathbb{T}^{2})}=-\omega^{2}\int_{\mathbb{T}^{2}}\epsilon u\overline{v} \diff,\quad (A_{0}^{(2)}u,v)_{H^{1}(\mathbb{T}^{2})}=\int_{\mathbb{T}^{2}}\nabla u\cdot \nabla \overline{v}\diff.
\end{equation}
The operator $A_{0}^{(1)}$ is selfadjoint and compact since $\epsilon\in L^{\infty}(\mathbb{T}^{2})$ is real-valued and $H^{1}(\mathbb{T}^{2})$ is compactly embedded into $L^{2}(\mathbb{T}^{2})$. 
Represent $u,v\in H^{1}(\mathbb{T}^{2})$ with its Fourier series:
\begin{equation}
	u(x)=\sum_{n\in\mathbb{Z}^{2}}\hat{u}_{n}\eu^{\spnx},\quad 	v(x)=\sum_{m\in\mathbb{Z}^{2}}\hat{v}_{m}\eu^{\spmx}.
\end{equation}
Since the gradients of $u$ and $v$ are in $L^{2}(\mathbb{T}^{2})$ the Plancherel's identity \cite{Yosida1980} gives
\begin{equation}
     \begin{aligned}
  			 (A_{0}^{(2)}u,v)_{H^{1}(\mathbb{T}^{2})} &=\int_{\mathbb{T}^{2}} \sum_{n\in\mathbb{Z}^{2}}\iu n\hat{u}_{n}\eu^{\spnx}\cdot \sum_{m\in\mathbb{Z}^{2}}-\iu m\hat{v}_{m}^{*}\eu^{-\iu m\cdot x}\diff\\
      &=|\Omega|\sum_{n\in\mathbb{Z}^{2}}|n|^{2}\hat{u}_{n}\hat{v}^{*}_{n},
     \end{aligned}
\end{equation}
which implies that the dimension of the null space is one. The dimension of the orthogonal complement of the range is also one, since $R(T)^{\bot}=N(T)$ for self-adjoint operators. That is, $A_{0}^{(2)}$ is a Fredholm operator with index (\ref{index}) zero. The self-adjoint operator pencil $Q(\lambda )$ is a sum of $A_{0}^{(2)}$ and compact operators, which is a Fredholm operator with the index of $A_{0}^{(2)}$ \cite{Kato1980}. 
\end{proof}

\subsubsection{Hamiltonian symmetry}
\label{sec:Ham}

We show in this section that the eigenvalues $\lambda$ of (\ref{ii}) are symmetrically placed with respect to the real and imaginary axes.
Lemma \ref{Asa} implies that the spectrum of $Q(\lambda)$ is symmetric with respect to $\mathbb{R}$, since $Q(\lambda)$ is invertible if and only if $Q^{*}(\lambda)$ is invertible.   
\begin{lemma}
The operators $A_{0}$, $A_{1}$ and $A_{2}$ have the properties $\overline{A_{0}u}=A_{0}\overline{u}$, $\overline{A_{1}u}=-A_{1}\overline{u}$ and $\overline{A_{2}u}=A_{2}\overline{u}$.
\label{Acomplex}
\end{lemma}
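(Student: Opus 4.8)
The plan is to verify each of the three identities weakly, testing against an arbitrary $v\in H^{1}(\TT)$: since two elements of $H^{1}(\TT)$ coincide exactly when their inner products against every $v$ agree, it suffices to match $(\overline{A_{k}u},v)_{H^{1}(\TT)}$ with $(A_{k}\overline{u},v)_{H^{1}(\TT)}$ (up to sign) for all $v$. First I would isolate the one structural fact that drives everything: because the $H^{1}(\TT)$ inner product has real coefficients and the gradient is a real operator, complex conjugation acts anti-unitarily, so that
\begin{equation}
	(\overline{w},v)_{H^{1}(\TT)}=\overline{(w,\overline{v})}_{H^{1}(\TT)}
	\label{conjip}
\end{equation}
for all $w,v\in H^{1}(\TT)$. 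This is checked directly by conjugating the defining integral $\int_{\TT}\nabla w\cdot\nabla\overline{v}+w\overline{v}\diff$ and using $\overline{\nabla(\cdot)}=\nabla\overline{(\cdot)}$ together with $\overline{\overline{v}}=v$.

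Next, for each $k$ I would rewrite the left-hand side using (\ref{conjip}) and the defining relation $(A_{k}u,v)_{H^{1}(\TT)}=a_{k}(u,v)$, obtaining $(\overline{A_{k}u},v)_{H^{1}(\TT)}=\overline{(A_{k}u,\overline{v})}_{H^{1}(\TT)}=\overline{a_{k}(u,\overline{v})}$, while the right-hand side is simply $(A_{k}\overline{u},v)_{H^{1}(\TT)}=a_{k}(\overline{u},v)$. The whole lemma then collapses to the three elementary form identities $\overline{a_{0}(u,\overline{v})}=a_{0}(\overline{u},v)$, $\overline{a_{1}(u,\overline{v})}=-a_{1}(\overline{u},v)$, and $\overline{a_{2}(u,\overline{v})}=a_{2}(\overline{u},v)$, each obtained by conjugating the explicit integral defining the corresponding form. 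The reality assumptions are precisely what make these hold: in $a_{0}$ and $a_{2}$ conjugation merely transfers onto the functions, and since $\epsilon$ and $\omega$ are real the coefficient $\omega^{2}\epsilon$ is left unchanged, yielding the plus sign in both cases. The sole origin of the minus sign for $A_{1}$ is the imaginary prefactor in $a_{1}(u,v)=2\iu\int_{\TT}u\,\hat{k}\cdot\nabla\overline{v}\diff$: conjugation sends $2\iu$ to $-2\iu$, whereas $\hat{k}\in\RR$ is real and the remaining bars fall on $u$ and $\nabla\overline{v}$, reproducing $a_{1}(\overline{u},v)$ up to the overall sign.

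I do not expect a genuine obstacle here; the only point demanding care is the bookkeeping around the anti-linearity of conjugation and the sesquilinearity convention, so that the conjugate bars and the factor $\iu$ are tracked consistently both in (\ref{conjip}) and in the three form identities. Keeping $v$ arbitrary rather than specializing it, and applying (\ref{conjip}) \emph{before} expanding the integrals, is what keeps the computation transparent and avoids sign errors.
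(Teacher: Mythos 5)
Your proposal is correct and follows essentially the same route as the paper's proof: the paper's three equality chains $(\overline{A_{k}u},v)_{H^{1}(\mathbb{T}^{2})}=\overline{(A_{k}u,\overline{v})}_{H^{1}(\mathbb{T}^{2})}=\pm a_{k}(\overline{u},v)=\pm(A_{k}\overline{u},v)_{H^{1}(\mathbb{T}^{2})}$ rest on exactly the anti-unitarity identity and the conjugated-form identities you isolate. You have merely made explicit the integral computations (realness of $\omega^{2}\epsilon$ and $\hat{k}$, the sign flip from $2\iu$) that the paper leaves implicit.
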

\noindent
\begin{proof} 
The lemma follows from the equalities
\[
	(\overline{A_{0}u},v)_{H^{1}(\mathbb{T}^{2})}=\overline{(A_{0}u,\overline{v})}_{H^{1}(\mathbb{T}^{2})}=a_{0}(\overline{u},v)=(A_{0}\overline{u},v)_{H^{1}(\mathbb{T}^{2})},
\]
\[
	(\overline{A_{1}u},v)_{H^{1}(\mathbb{T}^{2})}=\overline{(A_{1}u,\overline{v})}_{H^{1}(\mathbb{T}^{2})}=-a_{1}(\overline{u},v)=-(A_{1}\overline{u},v)_{H^{1}(\mathbb{T}^{2})},
\]
\[
	(\overline{A_{2}u},v)_{H^{1}(\mathbb{T}^{2})}=\overline{(A_{2}u,\overline{v})}_{H^{1}(\mathbb{T}^{2})}=a_{2}(\overline{u},v)=(A_{2}\overline{u},v)_{H^{1}(\mathbb{T}^{2})}.
\]
\end{proof}\\
\noindent
The theorem below states the Hamiltonian structure of the spectrum  $\sigma (Q(\lambda ))$. 

\begin{theorem}
The spectrum of the quadratic operator pencil $Q(\lambda)$ consists of eigenvalues of finite multiplicity that have the Hamiltonian structure $(\lambda,-\lambda,\overline{\lambda},-\overline{\lambda})$. 
\label{Hamiltonian}
\end{theorem}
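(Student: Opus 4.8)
The plan is to obtain both assertions—finite multiplicity of every spectral value and the fourfold symmetry—from the Fredholm property already established in Theorem~\ref{Fredholm} together with the self-adjointness and conjugation relations of Lemmas~\ref{Asa} and~\ref{Acomplex}, so that the theorem becomes essentially a bookkeeping consequence of these three facts rather than a fresh analytic estimate.

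First I would settle the multiplicity claim. By Theorem~\ref{Fredholm} the operator $Q(\lambda)$ is Fredholm of index zero for \emph{every} $\lambda\in\mathbb{C}$, so no point of $\mathbb{C}$ lies in the essential spectrum; hence $\sigma_{\mrm{ess}}(Q)=\emptyset$ and $\sigma(Q)=\sigma_{\mrm{discr}}(Q)$. For an index-zero Fredholm operator, failure of invertibility is equivalent to a nontrivial null space, so each $\lambda\in\sigma(Q)$ is an eigenvalue with $0<\mrm{nul}\,Q(\lambda)<\infty$, i.e. of finite geometric multiplicity. I expect the only genuinely delicate point to be the complementary statement that these eigenvalues are \emph{isolated} (so that the spectrum is discrete as a point set): that would follow from the analytic Fredholm theorem applied to the entire, index-zero family $Q$ once one exhibits a single $\lambda_{0}$ with $Q(\lambda_{0})$ invertible, and such nonemptiness of the resolvent set cannot be read off the numerical range (which in general already contains the origin for every $\lambda$), so one must instead argue injectivity directly, equivalently that the line $k=\lambda\hat{k}$ is not entirely contained in the Bloch variety at the fixed real frequency $\omega$.

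Next I would produce the two reflections that generate the quadruple. For the reflection across the real axis, self-adjointness (Lemma~\ref{Asa}) gives
\[
Q(\lambda)^{*}=A_{0}^{*}+\overline{\lambda}A_{1}^{*}+\overline{\lambda}^{2}A_{2}^{*}=A_{0}+\overline{\lambda}A_{1}+\overline{\lambda}^{2}A_{2}=Q(\overline{\lambda}).
\]
Since a bounded operator is invertible exactly when its adjoint is, $\lambda\in\rho(Q)\Leftrightarrow\overline{\lambda}\in\rho(Q)$, and for index-zero Fredholm operators $\mrm{nul}\,Q(\overline{\lambda})=\mrm{nul}\,Q(\lambda)^{*}=\mrm{def}\,Q(\lambda)=\mrm{nul}\,Q(\lambda)$, so $\lambda$ and $\overline{\lambda}$ are eigenvalues of equal multiplicity. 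For the reflection across the imaginary axis I would conjugate the eigenvalue equation: if $Q(\lambda)u=0$, applying $\overline{A_{0}u}=A_{0}\overline{u}$, $\overline{A_{1}u}=-A_{1}\overline{u}$, $\overline{A_{2}u}=A_{2}\overline{u}$ from Lemma~\ref{Acomplex} to the complex conjugate of the equation gives
\[
A_{0}\overline{u}-\overline{\lambda}A_{1}\overline{u}+\overline{\lambda}^{2}A_{2}\overline{u}=Q(-\overline{\lambda})\overline{u}=0.
\]
Thus the antilinear involution $u\mapsto\overline{u}$ maps $N(Q(\lambda))$ bijectively onto $N(Q(-\overline{\lambda}))$ and preserves complex dimension, so $-\overline{\lambda}$ is an eigenvalue of the same multiplicity as $\lambda$.

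Finally I would compose the two reflections: an eigenvalue $\lambda$ forces $\overline{\lambda}$ (real-axis reflection), $-\overline{\lambda}$ (imaginary-axis reflection) and hence $-\lambda$ to be eigenvalues of the same finite multiplicity, which is precisely the Hamiltonian quadruple $(\lambda,-\lambda,\overline{\lambda},-\overline{\lambda})$. The algebraic identities are immediate from the two lemmas; the only care needed is the multiplicity bookkeeping—checking that passing to the adjoint (for index-zero Fredholm operators) and applying the antilinear conjugation each preserve geometric multiplicity—and, if genuine isolation of the eigenvalues is also wanted, the resolvent-nonemptiness step flagged above.
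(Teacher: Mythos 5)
Your proof is correct and follows essentially the same route as the paper: Theorem~\ref{Fredholm} yields that every spectral point is an eigenvalue of finite multiplicity, Lemma~\ref{Acomplex} gives the reflection $\lambda \mapsto -\overline{\lambda}$ via conjugation of the eigenvalue equation, and Lemma~\ref{Asa} gives $Q^{*}(\lambda)=Q(\overline{\lambda})$ and hence the reflection across the real axis. Your extra bookkeeping (using index zero to equate nullity and deficiency, and noting that the antilinear conjugation preserves dimension) and your caveat about isolation of eigenvalues are refinements the paper leaves implicit, but the underlying argument is the same.
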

\begin{proof} 
Theorem (\ref{Fredholm}) states that $Q(\lambda )$ is a Fredholm operator, which implies that the spectrum $\sigma (Q(\lambda ))$ consists of eigenvalues of finite multiplicity. Lemma \ref{Acomplex} gives the relation
\begin{equation}	\overline{Q(\lambda)u}=\overline{A_{0}u}+\overline{\lambda}\overline{A_{1}u}+\overline{\lambda}^{2}\overline{A_{2}u}=A_{0}\overline{u}-\overline{\lambda}A_{1}\overline{u}+\overline{\lambda}^{2}A_{2}\overline{u}=Q(-\overline{\lambda})\overline{u},
\end{equation}	
which implies that $-\overline{\lambda}$ is an eigenvalue whenever $\lambda$ is an eigenvalue. The two relations
\begin{equation} 
	Q^{*}(-\lambda)=A_{0}-\overline{\lambda}A_{1}+\overline{\lambda}^{2}A_{2}=Q(-\overline{\lambda}),
\end{equation}
\begin{equation}
	Q^{*}(\lambda)=A_{0}+\overline{\lambda}A_{1}+\overline{\lambda}^{2}A_{2}=Q(\overline{\lambda}),
\end{equation}
follows from Lemma \ref{Asa}.
\end{proof}
 
\subsection{Band gaps}

Assume that the periodic permitivity $\epsilon$ is smooth in $\RR$. If the equation
\begin{equation}
	Lv=0,\quad L=\Delta +\omega^{2} \epsilon(x,\omega)
	\label{eq:bas2}
\end{equation} 
has a bounded solution $|v(x)|\leq C$, then (\ref{eq:bas2}) also has a Bloch solution with a real quasimomentum $k$. The existence of a Bloch solution with $k\in\RR$ implies that the operator $L$ is not invertible in $L_{2}(\mathbb{R}^{2})$, see Chapter 4 in \cite{Kuchment1993}. We will use the above results to define a band-gap in terms of the Bloch variety.

Assume that the parameter $\omega$ in $Q(\lambda)$, that corresponds to the time frequency, is real. The complex Bloch variety of the pencil $Q(\lambda)$ is the set
\begin{equation}
	B(Q)=\{(k,\omega)\in\mathbb{C}^{2}\times\mathbb{R}\,|\, Q(\lambda)u=0\,\text{has a non-zero solution}\},
	\label{BVK}
\end{equation}
and the real Bloch variety of the pencil $Q(\lambda)$ is the intersection
\begin{equation}
	B_{\mrm{real}}(Q)=B(Q) \cap \mathbb{R}^{3}.
	\label{realBVK}
\end{equation}
We state below the condition for a band-gap in terms of the real Bloch variety.
\begin{definition}
The frequency $\omega$ is in a band gap if the projection of the real Bloch variety onto the $\omega$-axis is empty. 
\label{def:gap}
\end{definition}

Solutions of the form $(k,\omega)\in\mathbb{R}^{2}\times\mathbb{R}$ exist if the frequency $\omega$ is not in a band gap. That is, bounded Bloch waves exist for the given frequency $\omega$ and we say that $\omega$ belongs to the stability zone.  A gap in the spectrum of $Q(\lambda)$ is also called a polarization gap since we only consider the polarization $E(x)=(0,0,u(x))$. Notice that we let $k$ be real in the frequency independent case and defined a band gap in (\ref{spA}) as an empty intersection of two successive bands
\begin{equation}
	[\min_{k}\omega_{n}^{2}(k),\max_{k}\omega_{n}^{2}(k)]\cap  [\min_{k}\omega_{n+1}^{2}(k),\max_{k}\omega_{n+1}^{2}(k)]=\{\}.
\end{equation}
We also used that the spectrum alternatively can be described in terms of the projection of the real Bloch variety (\ref{realBV}) onto the $\omega$-axis. 

The dual (or reciprocal) lattice to $\Gamma=2\pi\mathbb{Z}^{2}$ is
\begin{equation}
	\Gamma^{*}=\{q\in\mathbb{R}^{2}\,|\, \gamma\cdot q\in 2\pi\mathbb{Z}, \forall\gamma\in\Gamma\}
\end{equation}
We define the fundamental domain (the Brillouin zone) of the dual lattice $\Gamma^{*}=\mathbb{Z}^{2}$ as the set
\begin{equation}
\Omega^{*}=\left (-\frac{1}{2},\frac{1}{2}\right ]^{2}.
\end{equation} 
The real part of all eigenvalues $\lambda$ to (\ref{classA}) is $\Omega^{*}$- periodic, since the Bloch solutions (\ref{Bloch}) are periodic in the real part of $k=\hat{k}\lambda$. We write the complex spectral parameter $\lambda$ in the form
\begin{equation}
	\lambda=\lambda_{\mrm{r}}+\lambda_{\mrm{i}}\iu,
\end{equation}
where $\lambda_{\mrm{r}}$ and $\lambda_{\mrm{i}}$ are real numbers. The set
\begin{equation}
	B_{\Omega^{*}}=\{(\lambda_{\mrm{r}}\hat{k},\lambda_{\mrm{i}}\hat{k},\omega)\in \Omega^{*}\times\mathbb{R}^{2}\times\mathbb{R}\,|\, Q(\lambda)u=0\, \text{has a non-zero solution}\},
\end{equation}
is a subset of the Bloch variety (\ref{BVK}). This set corresponds to solutions with the real part of the wave vector $k=\hat{k} \lambda$ in the Brillouin zone $\Omega^{*}$ and a real time frequency $\omega$. The frequency $\omega$ is in a band-gap if $\lambda_{\mrm{i}}\neq 0$ for all eigenvalues of $Q(\lambda)$, with $\hat{k}\lambda_{\mrm{r}}\in \Omega^{*}$.

When $\omega=0$ and $\hat{k}=(1,0)$, the lemma below shows that the eigenvalues can be calculated explicitly.
\begin{lemma}  
Let $\omega=0$ and $\hat{k}=(1,0)$. Then all eigenvalues $\lambda\in\mathbb{C}$ can be written of the form
\begin{equation}
	\lambda=\mathbb{Z}+\iu\mathbb{Z}.
\end{equation}
\label{nollan}
\end{lemma}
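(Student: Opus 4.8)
The plan is to diagonalize the pencil in the Fourier basis of the torus. Setting $\omega=0$ annihilates the zeroth-order term, so the family (\ref{classA}) collapses to the constant-coefficient problem $-(\nabla+\iu\lambda\hat{k})\cdot(\nabla+\iu\lambda\hat{k})u=0$ on $\mathbb{T}^{2}$, entirely decoupled from $\epsilon$. With $\hat{k}=(1,0)$ this reads $-[(\partial_{x_1}+\iu\lambda)^{2}+\partial_{x_2}^{2}]u=0$. Since $u\in H^{1}(\mathbb{T}^{2})$, I would expand $u=\sum_{n\in\mathbb{Z}^{2}}\hat{u}_{n}\eu^{\spnx}$ in the orthogonal basis of exponentials (recall $\Gamma=2\pi\mathbb{Z}^{2}$, so the dual frequencies are $n\in\mathbb{Z}^{2}$), each of which lies in $H^{1}(\mathbb{T}^{2})$.

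First I would compute the action of the operator on a single mode. Applying $\partial_{x_1}\mapsto\iu n_{1}$ and $\partial_{x_2}\mapsto\iu n_{2}$ gives, after the overall sign, the scalar symbol $(n_{1}+\lambda)^{2}+n_{2}^{2}$ multiplying $\hat{u}_{n}\eu^{\spnx}$. Hence the operator acts diagonally in the Fourier basis, and $Q(\lambda)u=0$ holds for a nonzero $u$ precisely when this symbol vanishes for at least one index $n$ in the support of $\hat{u}$.

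Solving the scalar equation $(n_{1}+\lambda)^{2}+n_{2}^{2}=0$ yields $n_{1}+\lambda=\pm\iu n_{2}$, that is, $\lambda=-n_{1}\pm\iu n_{2}$. Because $n_{1},n_{2}\in\mathbb{Z}$, every such $\lambda$ lies in $\mathbb{Z}+\iu\mathbb{Z}$; conversely, given any $m+\iu\ell$ with $m,\ell\in\mathbb{Z}$, the choice $n=(-m,\ell)$ produces it as an eigenvalue with eigenfunction $\eu^{\spnx}$. This establishes the claimed form.

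The only point requiring care---and the closest thing to an obstacle---is verifying that this list is exhaustive, i.e.\ that no eigenvalue arises from a genuinely multimodal eigenfunction. This follows immediately from diagonality: since the operator multiplies the $n$-th Fourier coefficient by $(n_{1}+\lambda)^{2}+n_{2}^{2}$ independently of the others, the kernel of $Q(\lambda)$ is nontrivial if and only if this factor vanishes for some $n$, so the eigenvalue set is exactly $\{-n_{1}\pm\iu n_{2}\}=\mathbb{Z}+\iu\mathbb{Z}$, with no additional values hidden in superpositions of degenerate modes.
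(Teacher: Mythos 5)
Your proof is correct and follows essentially the same route as the paper: expand $u$ in the Fourier basis $\eu^{\spnx}$, observe that at $\omega=0$ the pencil acts diagonally with scalar symbol $(n_{1}+\lambda)^{2}+n_{2}^{2}$, and solve this equation for $\lambda$. The only cosmetic differences are that you factor the symbol over $\mathbb{C}$ directly and add the converse inclusion with explicit eigenfunctions, whereas the paper splits $\lambda$ into real and imaginary parts before solving; both arguments yield $\lambda=-n_{1}\pm\iu n_{2}\in\mathbb{Z}+\iu\mathbb{Z}$.
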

\begin{proof} 
Let $v(x)=\hat{v}_{m}\eu^{\iu m\cdot x}$, $m\in\mathbb{Z}^{2}$ and represent $u\in H^{1}(\mathbb{T}^{2})$ with its Fourier series
\begin{equation}
	u(x)=\sum_{n\in\mathbb{Z}^{2}}\hat{u}_{n}\eu^{\spnx}.
\end{equation}
The equation $(Q(\lambda)u,v)_{H^{1}(\mathbb{T}^{2})}=0$ gives
\begin{equation}
	 (m+\lambda\hat{k})\cdot (m+\lambda\hat{k})\hat{u}_{m}\hat{v}^{*}_{m}=0.
\end{equation}
The coefficients are zero whenever $\hat{u}_{m}\neq 0$ which implies
\begin{equation} 
  \left\{
    \begin{aligned}
			|m|^{2}+2\lambda_{\mrm{r}}\hat{k}\cdot m+\lambda_{\mrm{r}}^{2}-\lambda_{\mrm{i}}^{2}&= 0,\\
			\lambda_{\mrm{i}}\hat{k}\cdot m+\lambda_{\mrm{r}}\lambda_{\mrm{i}} &= 0.
    \end{aligned}\right. 
    \label{int}
\end{equation}  
Let $\hat{k}=(1,0)$ and assume $\lambda_{\mrm{i}}=0$. Then follows $\lambda=\lambda_{\mrm{r}}=-m_{1}$ and $m_{2}=0$. The secound equation in  (\ref{int}) is satisfied for $\lambda_{\mrm{r}}=-m_{1}$ and  $\lambda_{\mrm{i}}=\pm |m_{2}|$ follows from the first equation in (\ref{int}).
\end{proof}

Notice that eigenvalues $\lambda$ in the lemma above is real or purely imaginary if $m_{2}=0$. This case corresponds to a permittivity $\epsilon$ that only dependence on one coordinate. We illustrate the lemma in Section \ref{NumExp}.

The frequency $\omega=0$ is an eigenvalue of $A_{0}= A_{0}(\omega)$ and the corresponding eigenvector is constant. An eigenvalue $\omega$ of the operator-valued function  $A_{0}(\omega)$ corresponds to a solution of the classical eigenvalue problem
\begin{equation}
	-\Delta u=\omega^{2}\epsilon (x,\omega)u,
\end{equation}
where $u$ and $\epsilon$ are $\Gamma$-periodic. Assume that $(\tilde{u},\tilde{\omega})$ is a solution of   $A_{0}(\omega)u=0$. The pencil $Q(\lambda)$ can then be written of the form
\begin{equation}
	Q(\lambda)\tilde{u}=\lambda (A_{1}\tilde{u}+\lambda A_{2}\tilde{u}).
	\label{redK}
\end{equation}
Zero is always an eigenvalue of the pencil and a nonzero eigenvalue $\lambda$ is a solution of 
\begin{equation}
	 A_{1}\tilde{u}+\lambda A_{2}\tilde{u}=0,
\end{equation}
where $\tilde{u}\in N(A_{0}(\tilde{\omega}))$. All eigenvalues of (\ref{redK}) are real and the number of eigenvalues at $\omega=\tilde{\omega}$ depends on the dimension of the null space $N(A_{0}(\tilde{\omega}))$.

\section{Numerical treatment}

\subsection{Discretizing the problem}
In this section we discuss the numerical computation of approximate eigenvalues of the operator pencil $Q(\lambda)$. According to Theorem \ref{Hamiltonian} the spectrum of $Q(\lambda)$ has a Hamiltonian structure. A numerical method should preserve the symmetry of the spectrum. We will show that this can be achieved, if we discretize the weak formulation of the problem, which was given in
\eqref{ii}, by a conforming Galerkin ansatz.

Let $V_N$ be an $N$--dimensional linear subspace of $H^1(\mathbb{T}^2)$, and let $\{ \phi_1,\dotsc, \phi_N \}$ be a basis of $V_N$. The subspace $V_N$ can be constructed, for example, using a finite element ansatz. We seek $\lambda_N \in \mathbb{C}$ and
$u_N \in V_N$, such that
\begin{equation}
\lambda_N^2 a_2(u_N,v_N) + \lambda_N a_1(u_N,v_N) + a_0(u_N,v_N) = 0,
\end{equation}
for all $v_N \in V_N$. The approximate eigenfunctions are of the form
\begin{equation}
u_N = \sum_{j = 1}^N \alpha_j \phi_j,
\end{equation}
where the coefficients $\alpha_j$ are given by the solution of the quadratic matrix eigenvalue problem
\begin{equation}
\label{eq:quad-mat-qep}
\lambda^2_N \mat{A}_2 \mat{u} + \lambda_N \mat{A}_1 \mat{u} + \mat{A}_0 \mat{u} = 0,
\end{equation}
with $\mat{u} = (\alpha_1,\dotsc,\alpha_N)^\mathsf{T}$. The $N \times N$ matrices $\mat{A}_2$, $\mat{A}_1$ and 
$\mat{A}_0$ are given by
\begin{equation}
\begin{aligned}
(\mat{A}_0)_{mn} &= \int_{\mathbb{T}^2}\nabla \phi_n\cdot\nabla \phi_m-
\omega^2 \epsilon(\omega) \phi_n \phi_m\, \diff,\\
(\mat{A}_1)_{mn} &= 2 \iu \int_{\mathbb{T}^2} \phi_n\hat{k}\cdot\nabla \phi_m \diff,\\
(\mat{A}_2)_{mn} &= \int_{\mathbb{T}^2} \phi_n \phi_m\, \diff,\qquad &m, n&=1,\dotsc,N.
\end{aligned}
\end{equation}
Since $V_N$ is a subspace of $H^1(\mathbb{T}^2)$, we observe that the matrices in \eqref{eq:quad-mat-qep} have the same properties as the operators in the operator pencil (\ref{pencil}):
\begin{equation}
\mat{A}_2 = \mat{A}_2^\mathsf{H} > 0,\quad \mat{A}_1 = \mat{A}_1^\mathsf{H},\quad \mat{A}_0 = \mat{A}_0^\mathsf{H}.
\end{equation}
Note that the matrices $\mat{A}_2$ and $\mat{A}_0$ are real, while the matrix $\mat{A}_1$ is purely imaginary. Since we want to avoid complex matrix arithmetic in actual computations, we transform the problem as follows:
Letting $\lambda_N = -\iu\mu$, $\mat{K} = -\mat{A}_0$, $\mat{G} = \iu \mat{A}_1$ and $\mat{M} = \mat{A}_2$ 
the quadratic matrix eigenvalue problem can be rewritten as
\begin{equation}
\label{eq:quad-mat-gyroep}
\mu^2 \mat{M} \mat{u} + \mu \mat{G} \mat{u} + \mat{K} \mat{u} = 0.
\end{equation}
The $N \times N$ matrices in \eqref{eq:quad-mat-gyroep} are real and satisfy
\begin{equation}
\mat{M} = \mat{M}^\mathsf{T} > 0,\quad \mat{G} = -\mat{G}^\mathsf{T},\quad \mat{K} = \mat{K}^\mathsf{T}.
\end{equation}
The quadratic matrix eigenvalue problem in \eqref{eq:quad-mat-gyroep} is therefore called a \emph{gyroscopic matrix eigenvalue problem}. This name refers to the fact, that quadratic matrix eigenvalue problems of this form typically occur in the analysis of undamped gyroscopic systems \cite{tisseur2001}. It is easy to show, that the set of eigenvalues of a gyroscopic eigenvalue problem has a Hamiltonian structure: Suppose that $\mu$ is an eigenvalue of \eqref{eq:quad-mat-gyroep}. Then, there exists an eigenvector $\mat{u} \in \mathbb{C}^N$, $\mat{u} \neq 0$, such that
\begin{equation}
\mu^2 \mat{M} \mat{u} + \mu \mat{G} \mat{u} + \mat{K} \mat{u} = 0.
\end{equation}
Due to the structure of the matrices in \eqref{eq:quad-mat-gyroep}, we have
\begin{equation}
\label{eq:mat-hamiltonian}
\begin{aligned}
\overline{\mu}^2 \mat{M} \overline{\mat{u}} + \overline{\mu} \mat{G} \overline{\mat{u}} + \mat{K} \overline{\mat{u}}
&=\overline{\mu^2 \mat{M} \mat{u} + \mu \mat{G} \mat{u} + \mat{K} \mat{u}}
= 0,\\
(-\mu)^2 \mat{M} \mat{u} - \mu \mat{G} \mat{u} + \mat{K} \mat{u}
&=(\mu^2 \mat{M} \mat{u} - \mu \mat{G} \mat{u} + \mat{K} \mat{u})^\mathsf{T}
= 0,\\
(-\overline{\mu})^2 \mat{M} \overline{\mat{u}} - \overline{\mu} \mat{G} \overline{\mat{u}} + \mat{K} \overline{\mat{u}}
&=(\mu^2 \mat{M} \mat{u} - \mu \mat{G} \mat{u} + \mat{K} \mat{u})^\mathsf{H}
= 0.
\end{aligned}
\end{equation}
From \eqref{eq:mat-hamiltonian} follows that every element of the set $\{ \mu, \overline{\mu}, -\mu, -\overline{\mu} \}$ is an eigenvalue. Hence, the discrete problem will preserve the Hamiltonian structure of the continuous problem. This statement remains true as long as the discrete space $V_N$ is a subspace of $H^1(\mathbb{T}^2)$. In the context of finite element methods this means that the finite element basis functions satisfy periodic boundary conditions.

\subsection{Linearization}

The numerical solution of quadratic and especially of gyroscopic matrix eigenvalue problems is still an active field of research \cite{lancaster2003, ruhe2000, ferng1999}. A comprehensive overview of numerical algorithms that can be applied to quadratic matrix eigenvalue problems can be found in \cite{tisseur2001}.

A common way to treat quadratic matrix eigenvalue problems is to transform the quadratic problem to an equivalent generalized eigenvalue problem of double dimension \cite{gohberg+lancaster+rodman1982,tisseur2001}. The gyroscopic matrix eigenvalue problem in \eqref{eq:quad-mat-gyroep} can be linearized as
\begin{equation}
\label{eq:lin-gep}
\blk{A} \blk{x} = \mu \blk{B} \blk{x},
\end{equation}
where the $2N \times 2N$ block matrices $\mat{A}$ and $\mat{B}$ are given by
\begin{equation}
\blk{A} =
\begin{pmatrix}
\phantom{-}\mat{0} & \phantom{-}\mat{I} \\
-\mat{K} & -\mat{G}
\end{pmatrix},\quad
\blk{B} =
\begin{pmatrix}
\mat{I} & \mat{0} \\
\mat{0} & \mat{M}
\end{pmatrix},\quad
\blk{x} =
\begin{pmatrix}
\mat{u} \\ \mu \mat{u}
\end{pmatrix}.
\end{equation}
Since $\mat{M}$ is positive definite and $\mat{I}$ is the identity matrix, the block matrix $\blk{B}$ is also positive definite. The computation of the  eigenvalues $\mu$ can be done, for example, with the QZ-algorithm \cite{tisseur2001}. 

It turns out, however, that this algorithm is not well-suited for the compuation of photonic band gaps, mainly for two reasons: First, the matrices are typcially large and sparse. This is due to the fact, that they are generated by a finite element ansatz. The QZ-algorithm will destroy the sparsity in general and therefore lead to excessive storage requirements. Second, only a few eigenvalues need to be computed in order to determine the photonic band structure. For these two reasons Krylov space methods, such as an implicitely restarted Arnoldi process (IRA) or an implicitely restarted Lanczos process (IRL), are preferable
\cite{saad1992}. Using shift-and-invert strategies these algorithms can be used to compute a small number of eigenvalues $\mu_1, \dotsc, \mu_M$ that are closest to a prescribed shift parameter $\mu_0$. Each computed eigenvalue $\mu_\ell$ determines up to four  eigenvalues of the gyroscopic matrix eigenvalue problem, namely $\{ \mu_\ell, \overline{\mu}_\ell, -\mu_\ell, -\overline{\mu}_\ell \}$.

\subsection{The SHIRA algorithm}
A drawback of standard Krylov space methods is that they do not take into account the Hamiltonian structure of the set of eigenvalues of the gyroscopic matrix eigenvalue problem. Suppose that $\mu \in \mathbb{C}$ is an eigenvalue for \eqref{eq:quad-mat-gyroep} that is close to the real axis. A Krylov space method may find eigenvalues $\mu_1, \dotsc, \mu_M$, amoung which are approximations to $\mu_\ell$ and $\mu_{\ell'}$ to $\mu$ and to $\overline{\mu}$, respectively. Numerical round-off errors could however lead to
$\mu_\ell \neq \overline{\mu}_{\ell'}$. Thus, one would get a wrong extra quadruple of eigenvalues.

As a remedy, one might consider to use structure-preserving algorithms \cite{hwang2003}, like the SHIRA algorithm \cite{mehrmann2001}. SHIRA stands for \emph{skew-Hamiltonian implicitely restarted Arnoldi process}. As the name suggests, it is a modified Arnoldi process
that preserves the Hamiltonian symmetry of the eigenvalues of a skew-Hamiltonian matrix. The algorithm can be used to compute a small number of eigenvalues for a large and sparse gyroscopic eigenvalue problem. Below we summarize the SHIRA algorithm and refer to 
\cite{mehrmann2001} for details.

The the gyroscopic matrix eigenvalue problem in \eqref{eq:quad-mat-gyroep} is linearized as
\begin{equation}
\label{eq:lin-shep}
\blk{H} \blk{y} = \mu \blk{S} \blk{y},
\end{equation}
where the $2N \times 2N$ block matrices $\blk{H}$ and $\blk{S}$ are given by
\begin{equation}
\mat{H} =
\begin{pmatrix}
\mat{0} & -\mat{K} \\
\mat{M} & \mat{0}
\end{pmatrix},\quad
\mat{S} =
\begin{pmatrix}
\mat{M} & \mat{G} \\
\mat{0} & \mat{M}
\end{pmatrix},\quad
\mat{y} =
\begin{pmatrix}
\mu \mat{u} \\ \mat{u}
\end{pmatrix}.
\end{equation}
The matrix $\blk{H}$ turns out to be a \emph{Hamiltonian} matrix, i.e., it satisfies $(\blk{H}\blk{J})^\mathsf{T} = \blk{H}\blk{J}$, where the $2N \times 2N$ block matrix $\blk{J}$ is the imaginary unit matrix
\begin{equation}
\mat{J} = \begin{pmatrix}
\phantom{-}\mat{0} & \mat{I} \\ -\mat{I} & \mat{0}
\end{pmatrix}.
\end{equation}
The block matrix $\blk{S}$ is \emph{skew-Hamiltonian}, i.e., it satisfies $(\blk{S}\blk{J})^\mathsf{T} = -\blk{S}\blk{J}$. The factorization
\begin{equation}
\label{eq:fact}
\blk{S} = 
\begin{pmatrix}
\mat{I} & \frac{1}{2}\mat{G} \\
\mat{0} & \mat{M}
\end{pmatrix}
\begin{pmatrix}
\mat{M} & \frac{1}{2}\mat{G} \\
\mat{0} & \mat{I}
\end{pmatrix}
\end{equation}
allows us to rewrite \eqref{eq:lin-shep} as
\begin{equation}
\label{eq:linearization}
\blk{W} \blk{y} = \mu \blk{y}
\end{equation}
where $\blk{W}$ is the $2N \times 2N$ block matrix
\begin{equation}
\mat{W} = \begin{pmatrix}
\mat{I} & \frac{1}{2}\mat{G} \\
\mat{0} & \mat{M}
\end{pmatrix}^{-1}
\begin{pmatrix}
\mat{0} & -\mat{K} \\
\mat{M} & \mat{0}
\end{pmatrix}
\begin{pmatrix}
\mat{M} & \frac{1}{2}\mat{G} \\
\mat{0} & \mat{I}
\end{pmatrix}^{-1}.
\end{equation}
The matrix $\mat{W}$ turns out to be Hamiltonian \cite{mehrmann2001}. Since the spectrum of such a matrix has a Hamiltonian symmetry, the
eigenvalues of the matrix
\begin{equation}
\blk{R} = (\blk{W} - \mu_0)^{-1}(\blk{W} + \mu_0)^{-1}(\blk{W} - \overline{\mu_0})^{-1}(\blk{W} + \overline{\mu_0})^{-1} 
\end{equation}
will have the same symmetry for every shift parameter $\mu_0 \in \mathbb{C}$. Moreover, the eigenvalues of $\blk{R}$ that are largest in modulus correspond to eigenvalues of $\blk{W}$ that are closest to elements of the set $\{\mu_0,-\mu_0,\overline{\mu_0},-\overline{\mu_0}\}$. Therefore, an implicitely restarted Arnoldi process will in general converge to these eigenvalues. It should be noted that the application of $\blk{R}$ can be realized efficiently in actual computer programs.

It can be shown that the block matrix $\blk{R}$ is skew--Hamiltonian. In exact arithmetic, the Krylov spaces 
\begin{equation}
\mathcal{K}_n = \mathop{\mathrm{span}}\{ \blk{v}_0, \blk{R}\blk{v}_0, \blk{R}^2\blk{v}_0, \dotsc, \blk{R}^{n-1}\blk{v}_0 \} 
\end{equation}
that are generated by $\blk{R}$ should therefore be \emph{isotropic}, i.e., they should satisfy $\mathcal{K}_n \perp \mat{J}\mathcal{K}_n$. Due to numerical round-off errors, however, this property is usually lost after some Krylov space
iterations, and this will eventually destroy the Hamiltonian symmetry of the computed eigenvalues. In order to cope with this problem, SHIRA introduces an additional re-orthogonolization step, which ensures that the Krylov spaces remain isotropic up to machine precision.
Because of this, SHIRA always converges to complex conjugate pairs of eigenvalues in the right hand complex halfplane. The algorithm  thus prevents the occurrence of wrong extra eigenvalue quadruples.

The SHIRA algorithm has already been successfully used to compute the corner singularities in elasticity problems \cite{Apel2002}. In our computations we found that in general it needs less implicit restarts than a standard implicitely restarted Arnoldi process. However, in some cases, especially when the relevant eigenvalues where close to each other, we could not obtain convergence with the SHIRA algorithm. In these cases we resorted to the linearization in \eqref{eq:lin-gep} and used an IRA algorithm as provided by 
the ARPACK package.

\begin{figure}
\begin{center}
\includegraphics[width=5.5cm]{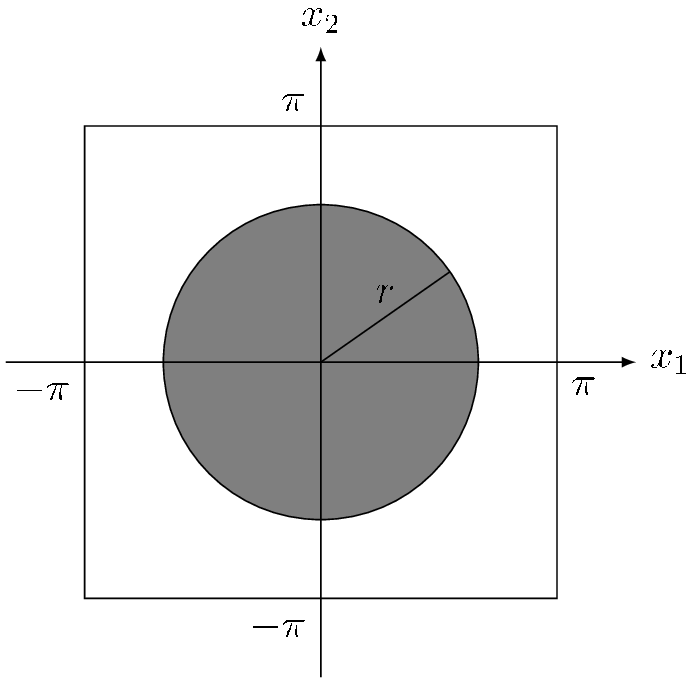}
\hspace{0.5cm}
\includegraphics[width=5.5cm]{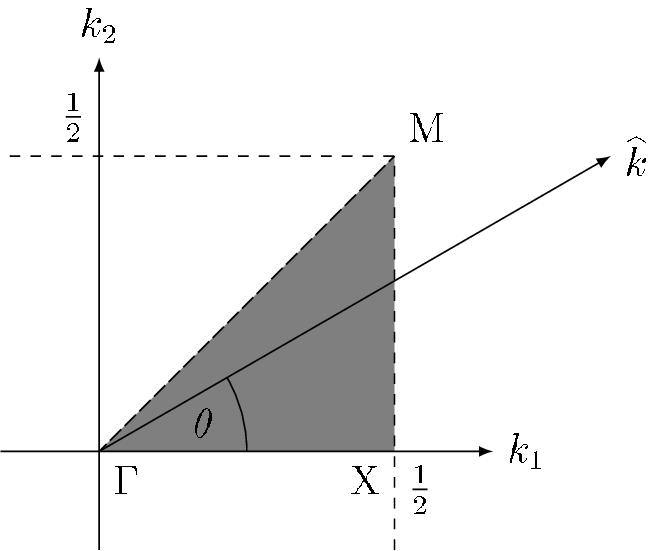}
\end{center}
\caption{The Wigner--Seitz cell of the photonic crystal (left) and the irreducible part of the Brillouin zone (right).}
\label{fig:cell_zone}
\end{figure}

\section{Numerical examples}
\label{NumExp}

To compute the band gaps, we choose frequencies $\omega$ in a frequency range $[\omega_a,\omega_b]$ and let
\begin{equation}
\widehat{k} = \begin{pmatrix} \cos \theta \\ \sin \theta \end{pmatrix}.
\end{equation}
One can show that the band structure can be calculated reducing the angles to $0\leq\theta\leq \pi/4$, which is called the irreducible Brillouin zone \cite{Kittel1986}. The operator pencil $Q(\lambda)$ is discretized with quadratic elements, which  results in a quadratic matrix eigenvalue problem on the form (\ref{eq:quad-mat-gyroep}). Solving these quadratic matrix eigenvalue problems, we obtain for each frequency
$\omega$ and each angle $\theta$ a set of eigenvalues $\Lambda(\omega,\theta)$. Eigenvalues $\lambda \in \Lambda(\omega,\theta)$ that satisfy $\lvert\lambda\rvert \leq 1/\cos \theta$, correspond to quasi-momentum vectors $k = \lambda \widehat{k}$
that lie in the first Brillouin zone. A frequency $\omega$ is a band gap frequency if the imaginary parts of all eigenvalues in $\Lambda(\omega,\theta)$ do not vanisch for all $\theta$.  

We compute band gaps when the periodic structure is a rectangular array of dielectric cylinders in air. The relative diameter of the cylinders with respect to the lattice constant is $0.75$. Below we consider one frequency independent model and one frequency dependent model. The mesh width is $h = 0.025$ in both cases, which leads to quadratic matrix eigenvalue problems with approximately $10^5$ unknowns. The method was implemented in \emph{Matlab} and the computations were performed on a standard laptop.

\begin{figure}
\begin{center}
\includegraphics[width=6cm]{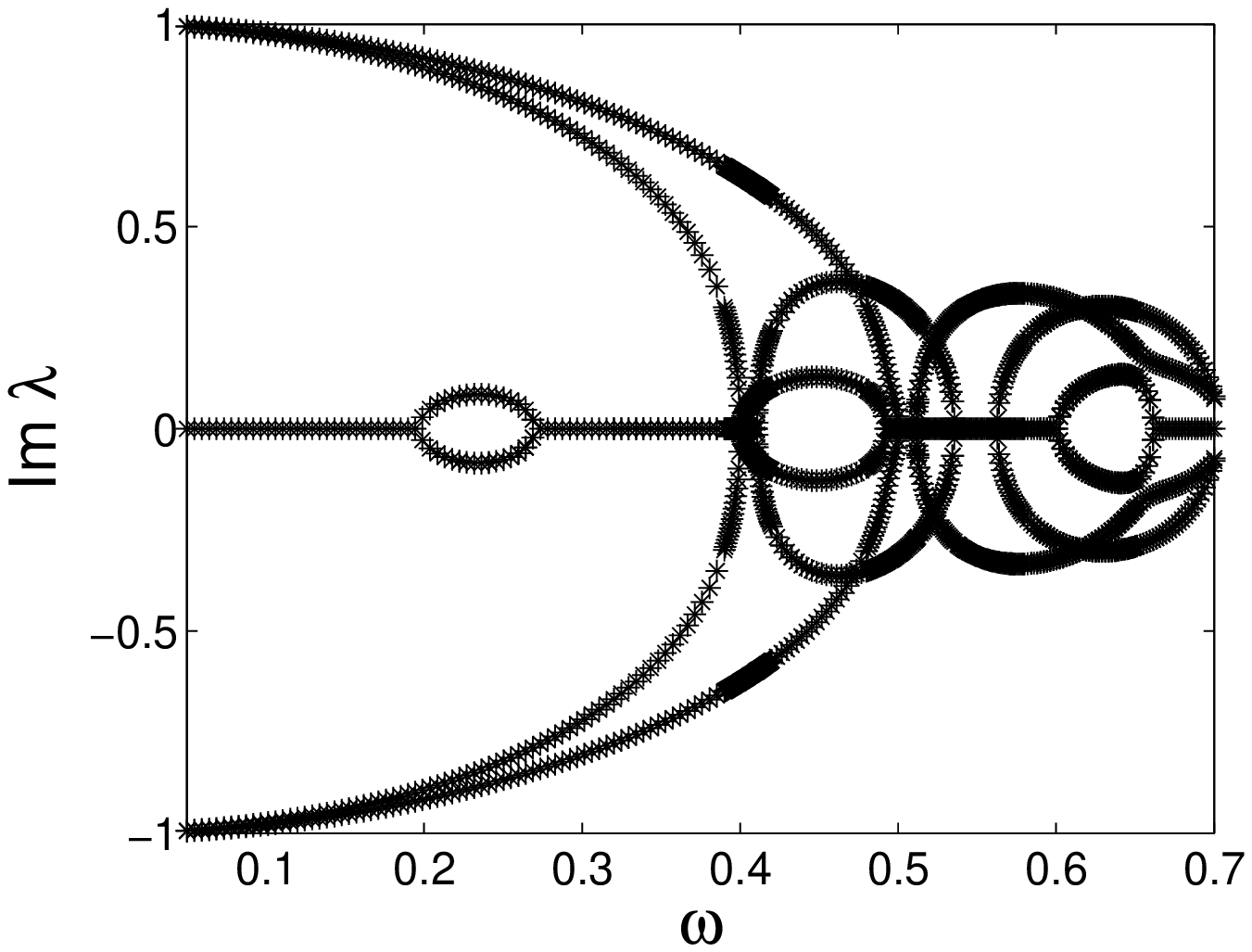}
\includegraphics[width=6cm]{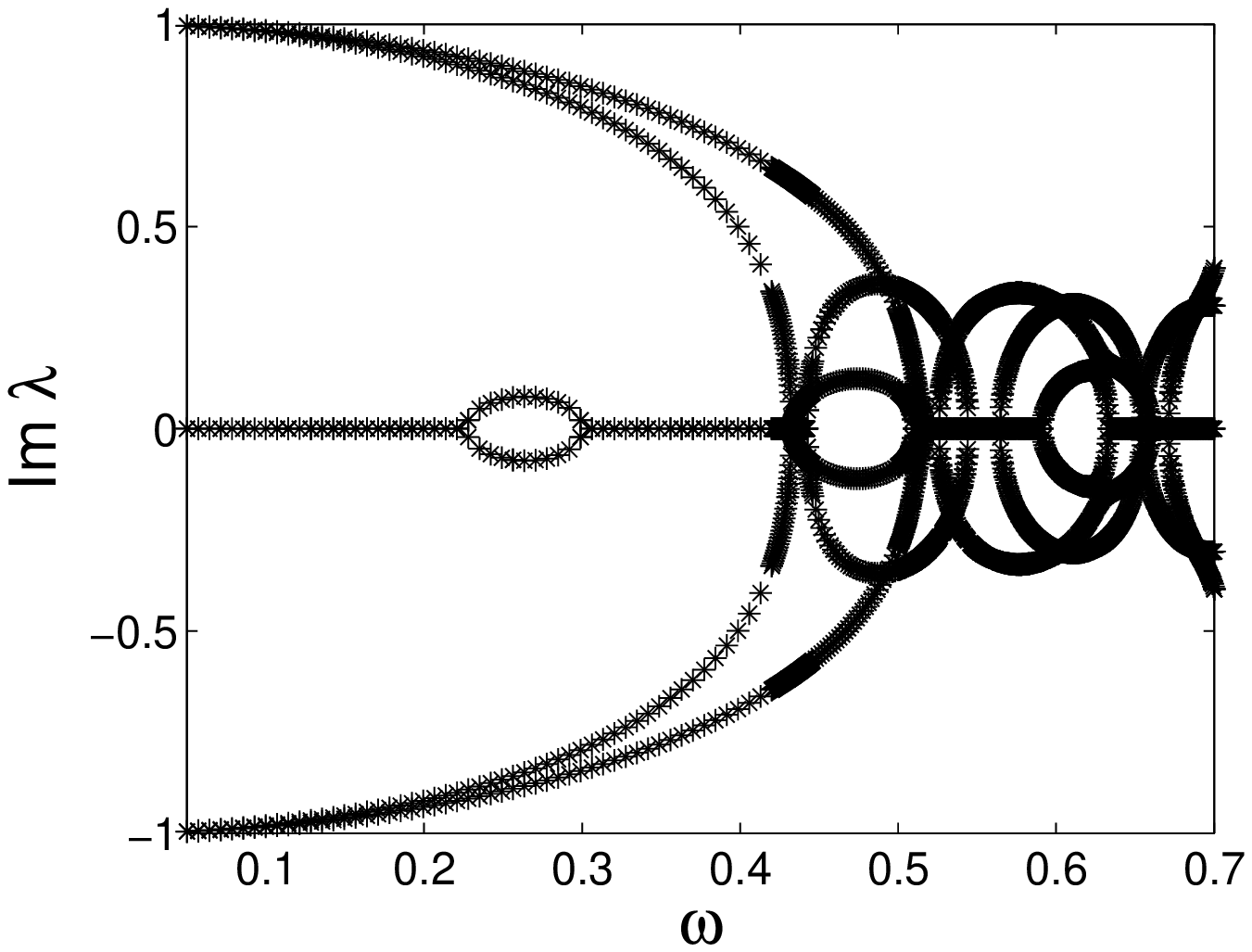}
\end{center}
\caption{
Imaginary part of $\lambda$ as a function of the frequency $\omega$ when $\hat{k}=(1,0)$. Left: Material parameters $\epsilon_{1}=1$ and $\epsilon_{2}=8.9$. Right: $\epsilon_{1}=1$ and $\epsilon_{2}$ according to the model (\ref{material}).}
\label{fig:imag}
\end{figure}

\subsection{A frequency independent model}

\begin{table}
\begin{tabular}{|p{0.8cm}|p{1cm}|p{2.9cm}|p{2.9cm}|p{2.9cm}|}
\hline
$h/2\pi$ &  DOFs     & First gap             & Second gap            & Third gap            \\
\hline
$0.050$  &  $  2608$  &  $(.24719,\,.27015)$  &  $(.41064,\,.45632)$  &  $(.61757,\,.66173)$  \\
$0.040$  &  $  4124$  &  $(.24710,\,.27016)$  &  $(.41064,\,.45631)$  &  $(.61755,\,.66171)$  \\
$0.030$  &  $  7412$  &  $(.24718,\,.27015)$  &  $(.41063,\,.45631)$  &  $(.61754,\,.66171)$  \\
$0.025$  &  $ 10872$  &  $(.24711,\,.27015)$  &  $(.41029,\,.45650)$  &  $(.61754,\,.66171)$  \\
\hline
\end{tabular}
\caption{Computed band gaps for the frequency independent model (Dobson).}
\label{bandgap1}
\end{table}

As a first illustration, we assume that the permittivity of the cylinders is $\epsilon_2 = 8.9$, and that the permittivity of air is $\epsilon_1 = 1$ for all frequencies under consideration. This model was numerically investigated in \cite{Dobson1999}. Since the permittivity is frequency independent, the band gaps can be calculated from the shifted problem (\ref{shifted2}). Dobson  \cite{Dobson1999} used this formulation and calculated $\omega(k)$ where $k$ is on the boundary of the irreducible Brillouin zone; See Figure \ref{fig:cell_zone}. It is know that a band gap generically opens on the boundary, but it is possible to construct a material function that gives a band gap opening inside the Brillouin zone \cite{Harrison+Kuchment+Sobolev+Winn2007}.

With the approach in this paper, a band gap corresponds to a frequency $\omega$, where the imaginary part of all eigenvalues $\lambda$ are non-zero. Figure \ref{fig:imag} shows the imaginary part of the eigenvalues $\lambda (\omega,\hat{k})$ with smallest imaginary part when $\hat{k}=(1,0)$ and $\omega\in [0,0.7]$. Notice that $\Im \lambda$ is close to $\pm 1$ for $\omega$ close to zero. We know from Lemma \ref{nollan} that all solutions have $\Im \lambda\in \Z$ when $\omega=0$. To compute band gaps, $\hat{k}$ is varied over the irreducible Brillouin zone with small steps in $\theta$ and $\omega\in [0,0.7]$. The band gap frequencies in Table \ref{bandgap1} match well with Dobson's results \cite{Dobson1999}.

\begin{table}[t]
\begin{tabular}{|p{0.8cm}|p{1cm}|p{2.9cm}|p{2.9cm}|p{2.9cm}|}
\hline
$h/2\pi$ &  DOFs     & First gap             & Second gap            & Third gap            \\
\hline
$0.050$  &  $  2608$  &  $(.28231,\,.30031)$  &  $(.44251,\,.47728)$  &  $(.60209,\,.63320)$  \\
$0.040$  &  $  4124$  &  $(.28230,\,.30031)$  &  $(.44251,\,.47728)$  &  $(.60208,\,.63318)$  \\
$0.030$  &  $  7412$  &  $(.28231,\,.30031)$  &  $(.44250,\,.47727)$  &  $(.60207,\,.63318)$  \\
$0.025$  &  $ 10872$  &  $(.28228,\,.30031)$  &  $(.44238,\,.47784)$  &  $(.60207,\,.63318)$  \\
\hline
\end{tabular}
\caption{Computed band gaps for the frequency dependent model (\ref{material}).}
\label{bandgap2}
\end{table}  

\subsection{A frequency dependent model}

\begin{figure}
\begin{center}
\includegraphics[width=12cm]{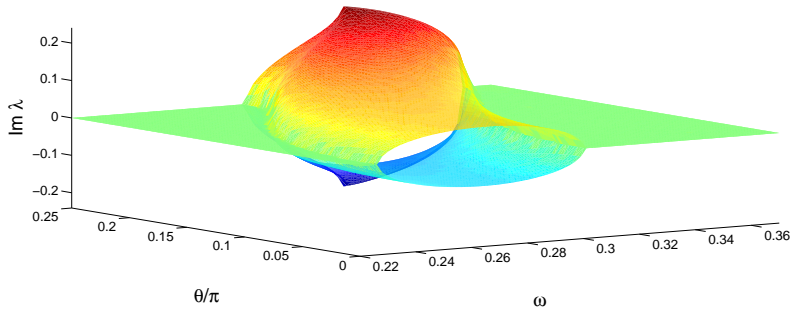}
\caption{
The first band gap tube $\Im \lambda (\omega,\theta)$ for the material model (\ref{material}).} 
\label{fig:Gap1a}
\end{center}
\end{figure}

We assume that the matrix material (air) has the permittivity $\epsilon_{1}=1$ for all frequencies under consideration. The permittivity of the dielectric cylinders is generally a complicated function of the frequency. It is common in practice to use least-squares minimization to determine a rational function approximation from frequency-domain data. The choice of the material model is not important for the algorithm, since the frequency dependence is only a parameter.  We will consider the simple model
\begin{equation}
	\epsilon_{2}(\omega)=a+\frac{b}{c-\omega^{2}},
	\label{material}
\end{equation}
where $a=1$, $b=5.34$, and $c=1$.  The constants in (\ref{material}) are chosen such that the mean value $(\epsilon (0)+\epsilon (0.7))/2$ is approximately $8.9$. We assume that this model is valid for all frequencies in the range $\omega\in [0,0.7]$. This assumption make it possible to compare the frequency dependent model with the calculations for the frequency independent case above. Figure \ref{fig:imag} shows the imaginary part of the eigenvalues $\lambda (\omega,\hat{k})$ with smallest imaginary part when $\hat{k}=(1,0)$ and $\omega\in [0,0.7]$. A frequency $\omega$ is in a band gap when all eigenvalues have a non-zero imaginary part. The lowest band gap in Table \ref{bandgap2} corresponds in Figure \ref{fig:Gap1a} to the tube. The band gap can alternatively be illustrated with spectral surfaces $\omega (\theta,\lambda)$, where $\lambda$ is a real eigenvalue. The spectral surfaces in Figure \ref{fig:Gap1b} are computed with the presented method by sorting the computed eigenvalues $\lambda (\omega,\hat{k})$.
 
\begin{figure}
\begin{center}
\includegraphics[width=9cm]{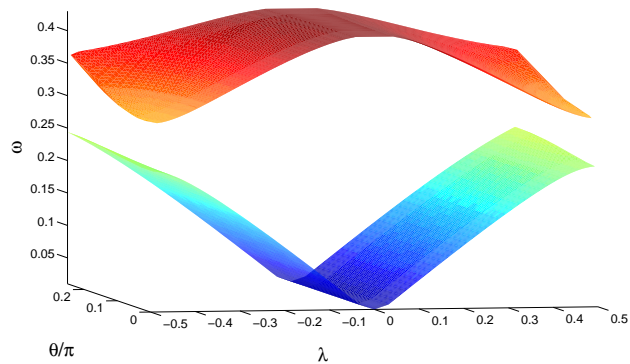}
\caption{
The spectral surfaces $\omega (\theta,\lambda)$ for the material model (\ref{material})} 
\label{fig:Gap1b}
\end{center}
\end{figure}

\section{Discussion and conclusions}

We have analyzed a method to compute Bloch waves in periodic and frequency dependent materials. The problem is formulated as a quadratic operator pencil in the quasimomentum $k$. This formulation is, in the case of frequency dependent materials, a significant simplification of usual approach, where the time frequency $\omega$ is calculated as a function of $k$. The pencil can be linearized since the non-linearity is polynomial. This is important, because many robust numerical algorithms for computing eigenvalues are only available for linear eigenvalue problems. The problem was discretized with finite elements and approximate eigenvalues was successfully computed with an implicitly restarted Arnoldi process.

\section{Acknowledgments}

The authors express their gratitude to Professor Kurt Busch, Professor Willy D\"{o}rfler, Thomas Gauss, Armin Lechleiter, Melanie Reimers and Arne Schneck for helpful discussions and acknowledges the support of the German Research Foundation (RTG 1294). 

\bibliography{total}
\bibliographystyle{siam}

DEPARTMENT OF MATHEMATICS, KARLSRUHE UNIVERSITY,\\ GERMANY\\\\
\noindent
\textsl{E-mail address}:\\
\noindent
\textbf{christian.engstroem@math.uni-karlsruhe.de},\\
\textbf{markus.richter@math.uni-karlsruhe.de} 
\end{document}